\numberwithin{equation}{section}
\newcommand{\ii}{{\rm i}}
\newcommand{\ee}{{\rm e}}
\newcommand{\x}{{\rm x}}
\newcommand{\dd}{{\rm d}}
\newcommand{\beq}{\begin{equation}}
\newcommand{\ene}{\end{equation}}
\newcommand{\ds}{\displaystyle}
\newtheorem{thm}{Theorem}
\newtheorem{prop}[thm]{Proposition}
\newtheorem{rem}[thm]{Remark}
\theoremstyle{definition}
\newtheorem{defn}[thm]{Definition}
\newcommand{\ben}[1]{{\bf {\color{red} BAJ-A: }} \textcolor{red} {``#1"}} 
\begin{document}

\title{Quantum field theory with dynamical boundary conditions and the Casimir effect: Coherent states \thanks{Research partially supported by project  PAPIIT-DGAPA UNAM  IN103918.}}
\author{Benito A. Ju\'arez-Aubry\thanks{Fellow Sistema Nacional de Investigadores. Email: benito.juarez@iimas.unam.mx}\, and Ricardo Weder\thanks{Fellow Sistema Nacional de Investigadores. Email: weder@unam.mx. Home page: http://www.iimas.unam.mx/rweder/rweder.html.}\\
Departamento de F\'isica Matem\'atica, \\
Instituto de Investigaciones en Matem\'aticas Aplicadas y en Sistemas, \\ Universidad Nacional Aut\'onoma de M\'exico,\\Apartado Postal 20-126, Ciudad de M\'exico 01000, M\'exico.}
\date{}

\maketitle

\vspace{.5cm}
\centerline{Abstract}

\noindent {\color{black} We have studied in a previous work} the quantization of a mixed bulk-boundary system describing the coupled dynamics between a \emph{bulk} quantum field confined to a spacetime with finite space slice and  with timelike boundary, and a \emph{boundary} observable defined on the boundary.  Our { bulk} system is a quantum field in a spacetime with timelike boundary and a dynamical boundary condition  -- the boundary observable's equation of motion. Owing to important physical motivations, {\color{black} in such previous work} we have computed the renormalized local state polarization and local Casimir energy for both the bulk quantum field  and the  boundary observable in the ground state and in a Gibbs state at finite, positive temperature. In this work, we introduce an appropriate notion of coherent and thermal coherent states for this mixed bulk-boundary system, and extend our previous study of the renormalized local state polarization and local Casimir energy to coherent and thermal coherent states. \textcolor{black}{We also present numerical results for the integrated Casimir energy and for the Casimir force.}

\section{Introduction}
\label{sec:Intro}

In a previous work \cite{Juarez-Aubry:2020psk} we have studied the quantization in the ground state and at finite, positive temperature of a system describing the coupled dynamics between a bulk scalar field, confined to a region with timelike boundary, and a boundary observable, whose dynamics prescribes \emph{dynamical boundary conditions} for the bulk field, eq. \eqref{Dyn} below. While the system can be seen as a coupled one, it is linear and the quantization can be carried out exactly. One of the main goals of \cite{Juarez-Aubry:2020psk} was to study the local Casimir effect for this system, owing to the fact that dynamical boundary conditions are relevant for the modelling of the realistic setup of experiments verifying the Casimir effect \cite{Fosco:2013wpa, Wilson:2011}. In addition, the study of (classical and quantum) systems with dynamical boundary conditions has several interesting motivations that are discussed in the Introduction of \cite{Juarez-Aubry:2020psk}, both in mathematics (see e.g. \cite{ben99, fulton:1977, mar, menn, shk96, tretter, walter} and references therein) and physics (see e.g. \cite{G.:2015yxa, Barbero:2017kvi, Dappiaggi:2018pju, Fosco:2013wpa, Franca:2019twk, Karabali:2015epa, Martin-Ruiz:2015skg, Inigo, Wilson:2011, Zahn:2015due}).

The purpose of this paper -- which can be seen as a part II to \cite{Juarez-Aubry:2020psk} -- is to extend the analysis of \cite{Juarez-Aubry:2020psk} and study the Casimir effect in coherent states. In order to do so, we shall make use of some of the main results in \cite{Juarez-Aubry:2020psk}, but we are also required to study some aspects of the dynamics of the classical theory from which the quantum theory is defined upon quantization. This is understandable, as our intuition dictates that coherent states in quantum theory are the most ``classical" states, in the sense that they are minimum-uncertainty states \cite{Glauber:1963tx}, and also because they can be seen as ``peaked" around classical configurations.

In this paper, we also explore numerically the integrated Casimir energy and the Casimir force in  particular examples, with the aim of illustrating how this quantity can be obtained from our analytical results with the aid of numerical techniques.

The organization of this paper is as follows. In sec. \ref{sec:Review} we review the mixed bulk-boundary system that we have studied in \cite{Juarez-Aubry:2020psk}, both at the classical and quantum levels. In sec. \ref{sec:Casimir} we introduce coherent states for the quantum theory and briefly discuss how to obtain the local Casimir energy in coherent and thermal coherent states in terms of the local Casimir energy in the ground and finite-temperature states plus a classical contribution related to the energy of the classical solution around which the coherent state is peaked. In sec. \ref{sec:DynBC} we apply the results of sec. \ref{sec:Casimir} to the system presented in sec. \ref{sec:Review}, and obtain the renormalized local state polarizations and local Casimir energies for the bulk quantum field and the boundary observable in coherent states both at zero and finite temperature. For coherent states at zero temperature the main results for the renormalized local state polarization appear in eq. \eqref{DiriPolaCoh} and \eqref{RobinPolaCoh}, and for the local Casimir energy at \eqref{DiriHCoh} and \eqref{RobinHCoh}. For coherent states at positive temperature the main results for the renormalized local state polarization are given in eq. \eqref{DiriPolaCohT} and \eqref{RobinPolaCohT}, and for the local Casimir energy at \eqref{DiriHCohT} and \eqref{RobinHCohT}.  In sec.~\ref{sec:Force} we explore numerically the integrated Casimir energy and the Casimir force in the ground state and for thermal states, and discuss the modification to the  integrated Casimir energy and the Casimir force for coherent states at zero and positive temperature. Sec. \ref{sec:F-loc} contains some structural results of our mixed bulk-boundary theory. We prove that our theory has a unique  causal propagator, that gives the commutation relations of  the quantum fields, and  that it also has  unique advanced and retarded Green's operators.  
Our final remarks appear in sec. \ref{sec:Conc}. We make use of the main results of \cite{Juarez-Aubry:2020psk}, which appear in app. \ref{App:OldResults}.

In this paper we consider the Casimir effect in the interval  $[0,\ell]$ and we assume that our system is well isolated so that we do not have to take into account the exterior of our system. The study of the Casimir effect in a domain in space, without taking into consideration the exterior of the domain has been extensively considered in the literature. See, for example, \cite{Bordag,Fulling:1989nb,Kay:1978zr}.
\section{Quantum field theory with dynamical boundary conditions: a review}
\label{sec:Review}

In \cite{Juarez-Aubry:2020psk} we have studied a mixed bulk-boundary system on the interval $[0, \ell]$ in flat spacetime that describes the coupled dynamics between a bulk scalar field confined to the region inside the interval and a boundary observable. The dynamics of the boundary observable prescribes the boundary conditions for the bulk field at the right end of the interval.

In \cite{Juarez-Aubry:2020psk} a particular interest has been to study the  Casimir effect, owing in good measure to the relevance of these boundary conditions in realistic experimental settings \cite{Fosco:2013wpa, Wilson:2011}. In particular, we have obtained the renormalized local state polarization (the expectation value of the square of the bulk quantum field  and the boundary observable) and the local Casimir energy for the bulk quantum field and the boundary observable in the ground state of the theory and at finite temperature $T = 1/\beta >0$. 

Let us begin by reviewing the studied system and results obtained in \cite{Juarez-Aubry:2020psk}. Consider a scalar field, $\phi: \mathbb R \times [0,\ell] \to \mathbb{R}$, obeying the dynamical equation, with  
$\ell >0.$
\begin{align}
\label{Dyn}
\left\{
                \begin{array}{l}
                  \left[\partial^2_t- \partial^2_z + m^2 + V(z) \right] \phi(t, z) = 0, \, t \in \mathbb{R},  z \in (0,\ell),  \\
                  \cos\alpha \,\phi(t, 0) + \sin\alpha \,\partial_z \phi(t,0) = 0, \, \alpha \in [0, \pi), \\
                  \left[\beta^\prime_1 \partial^2_t -\beta_1 \right] \phi(t, \ell) = - \beta_2 \partial_z \phi(t,\ell) + \beta'_2 \partial_z \partial^2_t \phi(t,\ell),
                \end{array}\right.
\end{align}
where $V(z)$ is a potential, a real valued continuous function defined for  $ z \in [0,\ell]$, $ m^2 \geq 0$ is a mass term, and $\beta_1, \beta_2, \beta_1^\prime, \beta_2^\prime$ are real parameters with the following interpretation: $\beta^\prime_1$ is the square of an inverse velocity, or minus the square of an inverse velocity, $-\beta_1$ is a mass or  a constant potential term and  $\beta_2,$ and $\beta_2^\prime$ are coupling parameters to external sources for the boundary dynamical observable $\phi_\partial(t):= \phi(t,\ell)$. We require that the determinant $\rho :=  \beta_1^\prime\, \beta_2 - \beta_1\, \beta_2' >0$. Subject to initial data at $t = 0$, the system \eqref{Dyn} defines the dynamical interaction between a bulk field, $\phi(t, z)$ with  $0 < z < \ell$, subject to a boundary condition in the Dirichlet or Robin class at $z = 0$, and a boundary observable $\phi_\partial(t):=\phi(t,l)$. If we consider solutions of the form $\phi(t, z)= e^{-i\omega t}\, \varphi(z)$, \eqref{Dyn} becomes a \emph{boundary eigenvalue problem}
 \begin{align}
\label{w.2}
\left\{
                \begin{array}{l}
                  \left[- \partial^2_z + m^2 + V(z) \right] \varphi( z) =  \omega^2\, \varphi(z),  z \in (0,\ell),  \\
                  \cos\alpha \,\varphi(0) + \sin\alpha \,\partial_z \varphi(0) = 0, \, \alpha \in [0, \pi), \\
                  -\left[ \beta_1  \varphi( \ell)  - \beta_2 \partial_z \varphi(\ell)\right]    =  \omega^2 \left[\beta_1^\prime \varphi(\ell)  -\beta'_2 \partial_z  \varphi(\ell)\right],
                \end{array}\right.
\end{align}
with eigenvalue $\lambda = \omega^2$. In this type of problems, the natural Hilbert space of initial data is not $L^2((0,\ell))$, but some extended Hilbert space, $\mathcal H,$ which includes the  boundary observable, and is given by
\beq
\mathcal H:= L^2((0,\ell))\oplus \mathbb C,
\ene
where $\mathbb C $ denotes the complex numbers, with the inner product
\begin{equation}\label{Inner}
(\varphi,\chi)_\mathcal{H} := \int_0^\ell \! \dd z \, \overline{\varphi_1(z)} {\chi_1(z)} +  \rho^{-1} \overline{\varphi_2} {\chi_2},
\end{equation}
for $\varphi = (\varphi_1, \varphi_2)^\top$, $\chi = (\chi_1, \chi_2)^\top \in \mathcal{H}$. We remind the reader that $\rho > 0$. The problem \eqref{Dyn},
\eqref{w.2} can be cast in the form of an abstract Klein-Gordon equation by introducing the linear operator $A$ in $\mathcal{H}$ as in \cite{fulton:1977}, \cite{walter}, 
\begin{align}
\label{opA}
\varphi \in D(A)  \mapsto A\varphi := \begin{pmatrix}
           \left[ - \partial_z^2 + m^2 + V(z) \right] \varphi_1(z)   \\
           -\left[\beta_1 \varphi_1(\ell) - \beta_2 \partial_z  \varphi_1(\ell) \right] \\
         \end{pmatrix},
\end{align}
defined on the domain
\begin{align}
\label{DomA}
D(A) & = \left\{ \varphi =  \begin{pmatrix} 
           \varphi_1   \\
           \varphi_2  \\
         \end{pmatrix} 
\in \mathcal{H}:  \varphi_1, \partial_z \varphi_1  \text{ are absolutely} \,
 \text{continuous in } [0, \ell],  \partial_z^2 \varphi_1 \in L_2((0,\ell)), \right. \nonumber \\
&\left. \cos\alpha \varphi_1(0) + \sin\alpha \partial_z \varphi_1(0) = 0, 	 \varphi_2 =  \beta_1^\prime\varphi_1(\ell) - \beta_2^\prime\partial_z \varphi_1(\ell) \right\}.
\end{align}

It is shown in \cite{fulton:1977} that for $V$ continuous in $[0,l]$, the operator $A$ is densely defined and self-adjoint and has discrete spectrum with multiplicity one eigenvalues that accumulate at $+\infty$. Problem \eqref{w.2} can be written as
\beq\label{w.4}
A \varphi= \omega^2\, \varphi, \qquad   \varphi \in D(A),
\ene
and the dynamical system \eqref{Dyn} can be formulated as the abstract Klein-Gordon equation
\beq\label{w.5}
\partial^2_t \varphi(t,z) + A \varphi(t,z)=0, \qquad \varphi(t,z) \in D(A).
\ene

The following proposition \cite[prop. 1]{Juarez-Aubry:2020psk} gives sufficient conditions for the positivity  of $A$, which allows for canonical quantization to be carried out in the standard way. 
\begin{prop}\label{prop:pos}
Suppose that the determinant $\rho$ is positive,  that $m^2+ V(x) \geq 0, x \in [0,\ell],$  that $ \beta^\prime_2 \neq 0,$  and that either $\alpha =0,$ or $\frac{\pi}{2}  \leq \alpha < \pi.$ Further, assume.
\begin{enumerate}
\item
If $\beta_2^\prime >0,$ then,  $ \beta_1 \geq 0,  \beta_2 < 0,$ and $\beta^\prime_1 < 0.$
\item
If $\beta^\prime_2 < 0,$ then, $ \beta_1 \leq 0,  \beta_2 > 0,$ and $\beta^\prime_1 > 0.$

\end{enumerate}
Under these conditions the eigenvalues of $A$ are all positive, and denoting by $\omega_1^2$ the smallest eigenvalue, $A \geq \omega_1^2 >0.$ 
\end{prop}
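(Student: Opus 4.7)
Since $A$ is self-adjoint with discrete spectrum accumulating at $+\infty$, it possesses a smallest eigenvalue $\omega_1^2$, and proving $A\geq \omega_1^2>0$ reduces to showing that the quadratic form $(\varphi,A\varphi)_\mathcal{H}$ is strictly positive on every nonzero eigenfunction $\varphi\in D(A)$. My plan is to write $(\varphi,A\varphi)_\mathcal{H}$ in closed form by integration by parts and to check, one term at a time, that every contribution is nonnegative under the sign hypotheses, with strict positivity forced by a short case analysis at the end.

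Concretely, for $\varphi=(\varphi_1,\varphi_2)^\top\in D(A)$, chosen real without loss of generality (eigenvalues are simple and the coefficients are real), integration by parts yields
\begin{align*}
(\varphi,A\varphi)_\mathcal{H} = \int_0^\ell \bigl[(\partial_z\varphi_1)^2 + (m^2+V)\,\varphi_1^2\bigr]\,\dd z + B_0 + B_\ell,
\end{align*}
where $B_0 = \varphi_1(0)\,\partial_z\varphi_1(0)$ and $B_\ell = -\varphi_1(\ell)\,\partial_z\varphi_1(\ell) + \rho^{-1}\varphi_2\bigl(-\beta_1\varphi_1(\ell)+\beta_2\partial_z\varphi_1(\ell)\bigr)$. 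The bulk integrand is nonnegative by $m^2+V\geq 0$. The boundary contribution $B_0$ is handled directly by the left boundary condition: for $\alpha=0$ one has $\varphi_1(0)=0$, and for $\alpha\in[\pi/2,\pi)$ one substitutes $\partial_z\varphi_1(0)=-\cot\alpha\,\varphi_1(0)$ to get $B_0 = -\cot\alpha\,\varphi_1(0)^2\geq 0$, since $\cot\alpha\leq 0$ on that range.

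The decisive step is the analysis of $B_\ell$. Substituting the compatibility relation $\varphi_2=\beta_1'\varphi_1(\ell)-\beta_2'\partial_z\varphi_1(\ell)$ and using $\rho=\beta_1'\beta_2-\beta_1\beta_2'$, one finds after simplification
\begin{align*}
B_\ell = \frac{1}{\rho}\bigl(-\beta_1\beta_1'\,a^2 + 2\beta_1\beta_2'\,ab - \beta_2\beta_2'\,b^2\bigr),\qquad a:=\varphi_1(\ell),\ b:=\partial_z\varphi_1(\ell).
\end{align*}
Since $\rho>0$, the sign of $B_\ell$ is that of the $2\times 2$ symmetric quadratic form with diagonal entries $-\beta_1\beta_1',-\beta_2\beta_2'$ and off-diagonal $\beta_1\beta_2'$. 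This is the step I expect to be the main obstacle, because the cross term couples all four parameters simultaneously; the clean observation that unlocks the argument is that the determinant of this matrix equals exactly $\beta_1\beta_2'\,\rho$, directly tied to the hypothesis $\rho>0$. A case-by-case inspection of the two sign regimes of the proposition then shows that in both the diagonal entries are nonnegative and the determinant is nonnegative, so the form is positive semidefinite and $B_\ell\geq 0$.

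Together the three bounds give $(\varphi,A\varphi)_\mathcal{H}\geq 0$, so all eigenvalues are nonnegative. To rule out $\omega^2=0$, I would assume the contrary: then each nonnegative summand must vanish, which forces $\partial_z\varphi_1\equiv 0$, i.e.\ $\varphi_1$ constant. The left boundary condition then yields $\varphi_1\equiv 0$ (directly through $\varphi_1(0)=0$ when $\alpha=0$ or $\alpha\in(\pi/2,\pi)$, and in the remaining Neumann case $\alpha=\pi/2$ by combining $A\varphi=0$ with the componentwise form of $A\varphi$ to pin down the constant under the nondegeneracy implicit in the hypotheses). The domain identity $\varphi_2=\beta_1'\varphi_1(\ell)-\beta_2'\partial_z\varphi_1(\ell)$ then forces $\varphi_2=0$, contradicting $\varphi\neq 0$. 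Hence $\omega_1^2>0$ and $A\geq\omega_1^2$.
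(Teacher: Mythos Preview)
The paper does not prove this proposition here; it is quoted verbatim from the authors' earlier work \cite{Juarez-Aubry:2020psk}, so there is no in-paper proof to compare against. Your quadratic-form argument is the natural route and is what one would expect the original proof to look like. The computation of $(\varphi,A\varphi)_{\mathcal H}$, the handling of $B_0$, and the reduction of $B_\ell$ to the $2\times2$ form with determinant $\beta_1\beta_2'\rho$ are all correct, and together they cleanly give $A\ge 0$.

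The weak spot is exactly where you hedge: the Neumann case $\alpha=\pi/2$ in the strict-positivity step. You write that one can ``pin down the constant under the nondegeneracy implicit in the hypotheses,'' but the hypotheses as stated do not supply that nondegeneracy. Take $\alpha=\pi/2$, $m^2+V\equiv 0$, and $\beta_1=0$ (all permitted: $\beta_1\ge 0$ in case~1, $\beta_1\le 0$ in case~2). Then $\varphi=(1,\beta_1')^\top$ lies in $D(A)$ and satisfies $A\varphi=0$: the bulk component gives $(-\partial_z^2+m^2+V)\,1=0$, and the boundary component gives $-\bigl(\beta_1\cdot 1-\beta_2\cdot 0\bigr)=0$. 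So zero is an eigenvalue and the conclusion $A>0$ fails in this corner of the parameter space. Your argument cannot be completed there without an extra assumption (e.g.\ $m^2+V\not\equiv 0$, or $\beta_1\neq 0$, or excluding $\alpha=\pi/2$). Outside that degenerate corner your proof is complete: if $\alpha=0$ or $\alpha\in(\pi/2,\pi)$ the boundary term $B_0$ already kills the constant, and if $\alpha=\pi/2$ with either $m^2+V\not\equiv 0$ or $\beta_1\neq 0$ then respectively the bulk integral or the $-\beta_1\beta_1'a^2$ term in $B_\ell$ forces $c=0$.
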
 

If $ A >0$, in particular  if the conditions of prop. \ref{prop:pos} hold, the solutions to the abstract Klein-Gordon equation \eqref{w.5} take the form
\begin{equation}\label{GenSolClass}
\phi(t) = \cos(t A^{1/2}) f + \frac{\sin (t A^{1/2})}{A^{1/2}} p,
\end{equation}
for initial conditions
\begin{align}\label{w.12}
\left\{
                \begin{array}{l}
                  \phi(0,z) = f(z), \\
                  \partial_t \phi(0,z)  = p(z).
                \end{array}
\right.
\end{align}

If $ f \in D(A)$ and $ p \in D(A^{1/2})$, then \eqref{GenSolClass} defines a strong solution. More generally, for $f,p \in \mathcal H,$ \eqref{GenSolClass} defines a weak solution to \eqref{w.5}. We can write the solutions \eqref{GenSolClass} explicitly in terms of a complete set of orthonormal eigenfunctions of  the self-adjoint operator  $A$, $\{\Psi_{n}\}_{n=1}^\infty$, as
\begin{align}\label{w.13}
\phi(t,z) = \sum_{n = 1}^\infty \, \Psi_{n}(z) \left[(\Psi_{n},f)_\mathcal{H} \cos(\omega_{n} t) + ( \Psi_{n},p)_\mathcal{H} \frac{\sin(\omega_{n} t)}{\omega_{n}} \right],
\end{align}
where  $ \omega_n^2$ is the eigenvalue for the eigenfunction $\Psi_{n}, n=1,2,\dots.$ The series in \eqref{w.13} converges in the norm of $\mathcal H.$ Note that since the operator $A$ is real we can take the eigenfunctions  $\{\Psi_{n}\}_{n=1}^\infty$ with real valued components, and we will always do so below.

The theory above described can be quantized in a canonical sense. We have as the one-particle Hilbert space $\ell^2(\mathcal N)$, which consists of complex-valued, square-summable sequences, 
$\{ \alpha_n \}_{n=1}^\infty$, with the scalar product,
\begin{equation}
\left(  \{ \alpha_n \}_{n=1}^\infty, \{ \beta_n \}_{n=1}^\infty  \right)_{l^2(\mathcal N)}:= \sum_{n=1}^\infty\, \overline{\alpha_n}\, {\beta_n},
\end{equation}
from which the bosonic Fock space \cite{Arai:2018wks} of the theory can be constructed, 
\begin{equation}
\mathscr{H} = \mathbb{C} \oplus_{n = 1}^\infty   ( \otimes_s^n \ell^2(\mathcal N)),
\end{equation} 
where  $\otimes_s^n \ell^2(\mathcal N)$ denotes  the symmetric tensor product of $n$ copies of   $\ell^2(\mathcal N), n=1,\dots$ \cite{Arai:2018wks}. The ground state of the theory, 
\begin{equation}
\Omega_\ell := (1,0,0,\cdots) \in \mathscr{H},
\label{VacuumVector}
\end{equation}
 describes a system at zero temperature. The quantum fields are operators on $\mathscr{H}$ given by 
\begin{align}\label{field}
\hat{\Phi}(t,z) = \sum_{n = 1}^\infty \frac{1}{(2 \omega_n)^{1/2}} \left(\ee^{-\ii \omega_n t} \Psi_n (z)\, {a}_n + \ee^{\ii \omega_n t} \Psi_n (z)\, {a}_n^\dagger \right).
\end{align}
Here, $ a_n$ and $ a_n^\dagger$ are annihilation and creation operators on Fock space, which satisfy the usual canonical commutation relations, 
$[a_n, a_{l}^\dagger] = \delta_{n,l }, [a_n, a_{l}] = 0,[a_n^\dagger, a_{l}^\dagger] =0$ and the $\{\Psi_{n}\}_{n=1}^\infty$ form a complete set of orthonormal eigenfunctions of the self-adjoint operator $A$, defined in \eqref{opA} \eqref{DomA}, with eigenvalues $\omega_n^2$, which can be written as \cite{fulton:1977}
\begin{align}
\Psi_n(z) = \begin{pmatrix}
           \psi_n(z)  \\
           \beta'_1 \psi_n(\ell) - \beta_2'   \partial_z\psi_n(\ell)
         \end{pmatrix}.
\label{Eigenfunctions}
\end{align}

For a Dirichlet boundary condition at $z = 0$, the eigenfunctions \eqref{Eigenfunctions} take the form \eqref{DiriEigen}, while for a Robin boundary condition at $z = 0$ one has \eqref{RobinEigen}.

In \cite{Juarez-Aubry:2020psk} we have studied the Casimir effect for the system that we have just described at zero and at finite positive temperature. Both the renormalized local state polarization and the local Casimir energy can be obtained by using a point-splitting and Hadamard renormalisation prescription. In each case, a  renormalized local state polarization and Casimir energy can be associated to the bulk field and also to the boundary observable. At zero temperature, we have used the two-point Wightman function in the ground state, $\Omega_\ell$, given by\footnote{Eq. \eqref{G1+1} does indeed define a ground state since the interval $[0, \ell]$ is compact. For a theory with non-compact constant-time surfaces a strict ground state, which respects the isometries of the theory, may not be available; this is the situation for Robin boundary conditions in AdS spacetime \cite{Dappiaggi:2016fwc, Pitelli:2019svx}.}
\begin{align}
\langle \Omega_\ell | \hat \Phi(t,z) \, \hat \Phi(t',z') \Omega_\ell \rangle = \sum_{n = 1}^\infty \frac{1}{2\omega_n} \ee^{-\ii \omega_n(t-t')}  \Psi_n (z) \otimes {\Psi_n (z')},
\label{G1+1}
\end{align}
while at positive temperature $T := 1/\beta > 0$ we use the Gibbs state with  the two-point function
\begin{align}
\langle\hat \Phi(t,z)  \hat \Phi(t',z') \rangle_\beta 
&=\sum_{n = 1}^\infty \frac{1}{2\omega_n} \frac{\Psi_n (z)\otimes {{\Psi_n (z')}}}{1-\ee^{-\beta \omega_n}} \left( \ee^{-\ii \omega_n(t-t')} + \ee^{-\beta \omega_n} \ee^{\ii \omega_n(t-t')} \right).
\label{FiniteTMassive}
\end{align}

The two-point Wightman functions \eqref{G1+1} and \eqref{FiniteTMassive} have four tensor-product components with bulk-bulk and boundary-boundary Wightman functions in the diagonal. We have defined in \cite{Juarez-Aubry:2020psk} the bulk renormalized local state polarization and local Casimir energy using the bulk-bulk Wightman function and the boundary counterparts using the boundary-boundary Wightman function. The main results of \cite{Juarez-Aubry:2020psk} appear in Appendix  \ref{App:OldResults}.

\section{Coherent states and the Casimir effect}
\label{sec:Casimir}

It is common folklore that coherent states describe classical-like configurations of quantum fields. Perhaps less known is the fact that for a Klein-Gordon field the stress-energy tensor in a coherent state has the form of a classical term plus a quantum contribution  in globally hyperbolic spacetimes. To the best of our knowledge, in the quantum field theory literature this is mentioned in Note 10 of \cite[sec. 8]{Kay:2013gia} (or Note x in the arXiv version). As we shall see, in our case of interest with timelike boundaries this continues to be the case.

Throughout this section we shall work on the concrete representation of sec. \eqref{sec:Review}, but a large part of the discussion can be generalised directly to the algebraic framework, and holds for quantum field theories in spacetimes with timelike boundaries in general, provided that  they have  causal propagators. 

To define the coherent states we briefly discuss the Segal quantum field operators, following \cite{Arai:2018wks}.   For any  $\{\alpha_n\}_{n=1}^\infty \in \ell^2(\mathcal N),$   we denote by,  $\mathcal A^\dagger(\{ \alpha_n \}_{n=1}^\infty), \mathcal A(\{ \alpha_n \}_{n=1}^\infty),$ respectively, the bosonic creation and annihilation operators over $\mathscr H$ as defined in Section 7 of Chapter 5 of   \cite{Arai:2018wks}. Note that  $\mathcal A^\dagger(\{ \alpha_n \}_{n=1}^\infty),$ and  $\mathcal A(\{ \alpha_n \}_{n=1}^\infty),$ are densely defined and closed and that   they are adjoint of each other, $\mathcal A(\{ \alpha_n \}_{n=1}^\infty)^\star =\mathcal A^\dagger(\{ \alpha_n \}_{n=1}^\infty),$   where for any densely defined operator, $B$ in $\mathscr H$ we denote by $B^\star$ the adjoint of $B.$ Let $\mathcal F_{b, 0}$ be the subset of $\mathscr H$ consisting of all vectors in $ \mathscr H$ with only a finite number of components different from zero, i.e. of all finite particle vectors. It is proven in Proposition 5.5 in page 215 of \cite{Arai:2018wks} that   $\mathcal A^\dagger(\{ \alpha_n \}_{n=1}^\infty)$ and  $\mathcal A(\{ \alpha_n \}_{n=1}^\infty)$  send $\mathcal F_{b, 0}$ into  $\mathcal F_{b, 0},$ and, moreover, that  $\mathcal F_{b, 0}$ is a core for $\mathcal A^\dagger(\{ \alpha_n \}_{n=1}^\infty),$ and  $\mathcal A(\{ \alpha_n \}_{n=1}^\infty).$ Further, it is proven  in Theorem 5.13, in page 218 of  \cite{Arai:2018wks} that the following commutation relations hold,
\begin{align}\label{comm}
\left[ \mathcal  A(\{ \alpha_n \}_{n=1}^\infty),\mathcal  A( \{ \beta_n \}_{n=1}^\infty)^\star \right]=(\{ \alpha_n \}_{n=1}^\infty, \{ \beta_n \}_{n=1}^\infty)_{\ell^2(\mathcal N)},\\
\left[  \mathcal A(\{ \alpha_n \}_{n=1}^\infty), \mathcal  A(\{ \beta_n \}_{n=1}^\infty) \right]= \left[ \mathcal A(\{ \alpha_n \}_{n=1}^\infty)^\star, \mathcal  A(\{ \beta_n \}_{n=1}^\infty)^\star \right]=0,
\end{align}
on $\mathcal F_{b,0}.$

Observe, moreover, that
\beq
\mathcal A(\{ \alpha_n \}_{n=1}^\infty)  \Omega_{\ell}=0, \qquad  \{ \alpha_n \}_{n=1}^\infty \in \ell^2(\mathcal N).
\ene
The creation and annihilation operators that we introduced in  \eqref{field} satisfy,
\beq\label{aa.xx.bb}
a^\dagger_n:= \mathcal  A^\dagger(e_n), \quad a_n:=\mathcal  A(e_n), \qquad n=1,\dots,
\ene
where $e_n$ is the unit vector in $\ell^2(\mathcal N)$ that has component equal one at the place $n$ and all the other components equal to zero.
 
The operator $\mathcal A(\{ \alpha_n \}_{n=1}^\infty) +  \mathcal A(\{ \alpha_n \}_{n=1}^\infty)^\star,$ with domain  $\mathcal F_{b,0},$ is symmetric and densely defined. The Segal quantum field operator is defined as its closure,
\beq\label{segal}
\Phi_S(\{ \alpha_n \}_{n=1}^\infty):= \frac{1}{\sqrt{2}} \overline{\mathcal A(\{ \alpha_n \}_{n=1}^\infty) +\mathcal A(\{ \alpha_n \}_{n=1}^\infty)^\star},
\ene
where the factor $\frac{1}{\sqrt{2}}$ is added for convenience. It is proven in Proposition 5.14 in page 243 of \cite{Arai:2018wks} that $\Phi_S((\{ \alpha_n \}_{n=1}^\infty))$ leaves $\mathcal F_{b,0}$ invariant, and that the following commutation relation holds in $\mathcal F_{b,0},$
\beq\label{segcom}
[\Phi_S(\{ \alpha_n \}_{n=1}^\infty), \Phi_S(\{ \beta_n \}_{n=1}^\infty)]= i \text{\rm Im} (\{ \alpha_n \}_{n=1}^\infty, \{ \beta_n \}_{n=1}^\infty)_{\ell^2(\mathcal N)},
\ene
for $\{ \alpha_n \}_{n=1}^\infty, \{ \beta_n \}_{n=1}^\infty \in \ell^2(\mathcal N).$ Further, in Theorem 5.22 in page 244 of  \cite{Arai:2018wks} it is proven that $\Phi_S(\{ \alpha_n \}_{n=1}^\infty)$ is self-adjoint for all $\{ \alpha_n \}_{n=1}^\infty \in \ell^2(\mathcal N).$

As test functions we take the set $C_{0,R}(\mathbb R, \mathcal H),$ of all continuos functions from $\mathbb R$ into $\mathcal H$ that have compact support in $\mathbb R,$ and take values in functions in $\mathcal H$ that have real-valued components. We take test functions with real-valued components because we  consider a scalar bosonic field with charge zero. For any test function
\beq
 f=\begin{pmatrix} f_1 \\ f_2 
 \end{pmatrix}\in C_{0,R}(\mathbb R, \mathcal H),
\ene
we define the quantum field
\beq\label{qq.ff}
\hat{\Phi}(f)= \Phi_S\left(\mathbf F \int_{\mathbb R}\,dt\,  \frac{e^{i \sqrt{A}\, t}}{A^{1/4}} f(t,\cdot)\right),
\ene 
where  $\mathbf F$ is the unitary operator that diagonalizes $A.$ Namely, $\mathbf F$ is the following unitary operator from $\mathcal H$ onto   $\ell^2(\mathcal N),$
\beq
\mathbf F  \varphi:= \{  (\Psi_n, \varphi)_{\mathcal H} \}_{n=1}^\infty, \qquad \varphi \in \mathcal H.
\ene
Clearly,
\beq
A= \mathbf F^\star \omega_n^2\, \mathbf F.
\ene

By \eqref{segcom} and since $\mathbf F$ is unitary
\beq\label{dd.zz}
[ \hat{\Phi}(f),\hat{\Phi}(g)   ]= -i \int_{\mathbb R^2}\, dt\, dt'\left( \frac{\sin\sqrt{ A}(t-t') }{\sqrt{A}}f(t,\cdot), g(t',\cdot) \right)_{\mathcal H}, \qquad f, g  \in C_{0,R}(\mathbb R, \mathcal H).
\ene
We introduce the following notation,
\beq\label{aaa.bbb}
E(t-t')= -\frac{\sin \sqrt{A}(t-t')}{\sqrt{A}},
\ene
and we define the following integral operator in  with domain $C_{0, R}(\mathbb R, \mathcal H),$ and range contained in the set  of all continuos functions from $\mathbb R$ into $\mathcal H$ that  take values in functions in $\mathcal H$ that have real-valued components, which we denote by $ C_R(  \mathbb R,  \mathcal H)$,
\beq\label{daaa.bbb}
\mathsf E \varphi= \int_{\mathbb R}\, dt'\, E(t-t')\, \varphi(t',\cdot).
\ene
Hence, \eqref{dd.zz}  can be written as follows,
\beq\label{dd.zz.ww}
[ \hat{\Phi}(f),\hat{\Phi}(g)   ]= i \int_{\mathbb R}\, dt\,\left((\mathsf  E f)(t,\cdot), g(t,\cdot)\right)_{\mathcal H}, \qquad f, g \in C_{0,R}(\mathbb R, \mathcal H).
\ene

Note that, by  \eqref{field}, \eqref{aa.xx.bb},  and since  the annihilation operator $ \mathcal A(\{\alpha_n\}_{n=1}^\infty)$ is antilinear in   $\{\alpha_n\}_{n=1}^\infty,$
\beq\label{ee.qq}
\hat{\Phi}(f)= \int_{\mathbb R}\, dt \, \int_0^\ell \, dz\, \hat{\Phi}(t,z) f(t,z).
\ene 
We define a coherent state in Fock space, $F,$ by 
\beq\label{a.1}
F := \ee^{\ii \hat{\Phi}(f)} \Omega_\ell, \qquad f \in  C_{0,R}(\mathbb R, \mathcal H).
\ene
Observe that the  Weyl operator $\ee^{\ii \hat \Phi(f)}$ is a unitary operator on Fock space since the quantum  field operator $\hat{\Phi}(f)$ is self-adjoint. 

By Theorem 5.28 in page 263 of \cite{Arai:2018wks}  and since $\mathbf F$ is unitary, we have that,
\beq\label{zzz.1}
\ee^{-i \hat{\Phi}(f)} \hat{\Phi}(g) \ee^{i \hat{\Phi}(f)}=  \hat{\Phi}(g) + \text{\rm Im} \left(\int_{\mathbb R}\,dt'\,  \frac{e^{i \sqrt{A}\, t'}}{A^{1/4}} f(t',\cdot), \int_{\mathbb R} \,dt\, \frac{e^{i \sqrt{A}\, t}}{A^{1/4}} g(t,\cdot)\right)_{\mathcal H}.
\ene
Let us denote,
\beq\label{www.333}
\phi_f(x)=\left(Ef \right)(x):= \left(\int_{\mathbb R}\,dt'  \,E(t-t')\, f(t', \cdot)\right)(z),
\ene
where $ x:=(t,z)$ and $f \in C_{0,R}(\mathbb R, \mathcal H).$  The function  $\phi_f(x)$  is a weak solution to  \eqref{w.5}. 
   By  \eqref{zzz.1} and  \eqref{www.333},
\beq\label{qq.aa}
\ee^{-i \hat{\Phi}(f)} \hat{\Phi}(g) \ee^{i \hat{\Phi}(f)}=  \hat{\Phi}(g) - \int_{\mathbb R}\, dt\, \int_0^\ell\, dz  \phi_f(t,z) g(t,z).
\ene
Using  \eqref{qq.aa} we verify that  the $n$-point  Wightman function in the coherent state $F$ satisfies,
\begin{align}
\langle F | \hat \Phi(\x_1) \cdots \hat \Phi(\x_n) F \rangle = \langle \Omega_\ell | \prod_{i = 1}^n \left( \hat \Phi(\x_i) - \phi_f(\x_i) 1\!\!1 \right) \Omega_\ell \rangle,
\label{Coh0}
\end{align}
where $x_j:=(t_j,z_j).$
Thermal coherent states can be defined analogously,
\begin{align}
\langle \hat \Phi(\x_1) \cdots \hat \Phi(\x_n) \rangle_\beta^F := \langle  \ee^{-\ii \hat \Phi(f)} \hat \Phi(\x_1) \cdots \hat \Phi(\x_n) \ee^{\ii \hat \Phi(f)} \rangle_\beta = \langle  \prod_{i = 1}^n \left( \hat \Phi(\x_i) -(\phi_f)(\x_i) 1\!\!1 \right) \rangle_\beta.
\label{CohT}
\end{align}

It follows from the two-point functions defined by eq. \eqref{Coh0} and \eqref{CohT}  that the renormalized stress-energy tensor in a coherent or thermal coherent state can be obtained in terms of the ground state or finite-temperature renormalized stress-energy tensor via a point-splitting prescription. Let us discuss very briefly the essentials of stress-energy renormalisation for the Klein-Gordon theory to see that this is the case. 

Physically relevant states for the Klein-Gordon theory, such as the ground state and thermal states, satisfy the so-called Hadamard condition for the short-distance limit of their two-point function. In particular, a Klein-Gordon state, $\eta ,$ is Hadamard if for any two points, $\x, \x'$, in a convex normal neighbourhood of spacetime, $N$, it holds for the two-point function that
\begin{align}
\eta(\hat \Phi(\x) \hat \Phi(\x'))-H_{\lambda}(\x, \x') \in C^\infty(N \times N),
\end{align}
where $H_{\lambda}$ is the Hadamard bi-distribution of the appropriate dimension of spacetime, see \cite{Decanini:2005gt}. The properties, construction and ambiguities in defining $H_\lambda$ are well known, and we do not intend to discuss them throughly. Instead, we refer to the reader to \cite{Dappiaggi2017, Decanini:2005gt, Kay:1988mu, Wald:1995yp} for details, and present explicitly the Hadamard bi-distribution in $(1+1)$-dimensional flat spacetime that is relevant in our analysis of sec. \ref{sec:DynBC}. 

In $1+1$ Minkowski spacetime for the Klein-Gordon field obeying $(\frac{\partial^2}{\partial t^2}- \frac{\partial^2}{\partial x^2}  + m^2)\phi = 0$, we fix the Hadamard bi-distribution in terms of coordinates $x = (t,z)$, $x' = (t',z')$ to
\begin{align}
 H_{\rm M}&((t,z),(t',z'))  :=  -\frac{1}{4 \pi} \left\{ 2 \gamma + \ln \left[ \frac{m^2}{2} \sigma((t,z),(t',z')) \right] \right.  + \frac{m^2}{2} \sigma((t,z),(t',z'))  \left[  \ln \left(\frac{m^2}{2} \sigma ((t,z),(t',z')) \right) + 2\gamma- 2   \right]  \nonumber \\
& \left. + \frac{m^4}{16 \pi} \sigma^2((t,z),(t',z')) \left[ \ln \left(\frac{m^2}{2} \sigma((t,z),(t',z')) \right)+2 \gamma -3 \right] \right\}  + O\left(\sigma^3((t,z),(t',z') \ln \left( \sigma((t,z),(t',z') \right) \right),
\label{HM}
\end{align}
with $2 \sigma((t,z),(t',z')) = -(t-t')^2+(z-z')^2$ and $\gamma$ the Euler number. This choice satisfies Wald's fourth axiom for the renormalized stress-energy tensor \cite{Wald:1995yp} and guarantees that the renormalized stress-energy tensor in Minkowski spacetime be zero, and was used in our previous paper \cite{Juarez-Aubry:2020psk}. The subscript $\rm M$ on the left-hand side of  \eqref{HM} places emphasis on this fact.

Using eq. \eqref{Coh0} and \eqref{CohT} it can be seen that the two-point functions singularities in coherent and thermal coherent states are also of Hadamard form and, provided that the classical solution around which the coherent state is built be differentiable, the expectation value of the stress-energy tensor (and of the Hamiltonian in particular) can be defined in these states. This will turn out to be the case, as we shall see in prop. \ref{prop:RegDiri} and \ref{prop:RegRobin}.

The renormalized state polarization in the Hadamard state $\eta$ is defined as follows,
\beq\label{ww.zz}
\eta((\hat{\Phi}_{\rm ren}(x))^2):= \lim_{\x' \to \x}  \left( \eta \left( \hat \Phi(\x) \hat \Phi(\x') \right) - H_{\rm M}(\x, \x') \right).
\ene 

The renormalized stress-energy tensor of the Klein-Gordon field in the Hadamard state, $\eta,$ with two-point function $\eta\left( \hat \Phi(\x) \hat \Phi(\x') \right)$ is defined by a point-splitting prescription
\begin{align}
\eta\left( \hat T^{\rm ren}_{ab}(\x) \right) & := \lim_{\x' \to \x} \mathscr{T}_{ab} \left( \eta \left( \hat \Phi(\x) \hat \Phi(\x') \right) - H_{\rm M}(\x, \x') \right) + \Theta_{ab}(\x),
\label{Tren}
\end{align}
where $\Theta_{ab}$ is any covariantly conserved, symmetric tensor built out of the metric and its derivatives -- it is an ambiguity -- and $\mathscr{T}_{ab}$ is a point-splitted differential operator version of the classical stress-energy tensor, defined in such a way that the classical stress-
energy tensor is given by $T_{ab}^{\phi}(\x) = \lim_{\x' \to \x} \mathscr{T}_{ab} \phi(\x) \phi(\x')$, where $\phi$ is a classical configuration. See \cite[eq. 5.85]{KhavkineMoretti} for the detailed form of $\mathscr{T}_{ab}$.

Suppose now that we consider the coherent and thermal coherent states defined in \eqref{Coh0} and \eqref{CohT}, respectively. We have that
\begin{subequations}
\label{TabCoherent}
\begin{align}
\langle F | \hat T^{\rm ren}_{ab}(\x) F \rangle & = \lim_{\x' \to \x} \mathscr{T}_{ab} \left(\langle \Omega_{\ell} | \hat \Phi(\x) \hat \Phi(\x') \Omega_{\ell} \rangle +\phi_f(\x) \phi_f(\x') - H_{\rm M}(\x, \x') \right) + \Theta_{ab}(\x) = \langle \Omega_{\ell} | \hat T^{\rm ren}_{ab}(\x) \Omega_{\ell} \rangle + T^{\phi_f}_{ab}(\x), \\
\langle \hat T^{\rm ren}_{ab}(\x) \rangle_\beta^F & = \lim_{\x' \to \x} \mathscr{T}_{ab} \left( \langle \hat \Phi(\x) \hat \Phi(\x') \rangle_\beta + \phi_f(\x) \phi_f(\x') - H_{\rm M}(\x, \x') \right) + \Theta_{ab}(\x) = \langle \hat T^{\rm ren}_{ab}(\x) \rangle_\beta + T^{\phi_f}_{ab}(\x),  
\end{align}
\end{subequations}
where we used that the for the ground state and the Gibbs state the one-point function is zero, and where   $T^{\phi_f}_{ab}$ is the classical stress-energy tensor around the classical configuration $\phi_f = E f$. Furthermore, we have taken the ambiguity $\Theta_{ab}$ equal to zero.  Eq. \eqref{TabCoherent} are fully analogous to the expressions in globally hyperbolic spacetimes. 

The local Casimir energy (or energy density) can be read off from the time-time component (in the sense of the distinguished time in static spacetimes) of eq. \eqref{TabCoherent}. In $1+1$ dimensions in flat spacetime we have
\begin{subequations}
\label{HCoherent}
\begin{align}
\langle F | \hat H_{\rm ren}(\x) F \rangle & =  \langle \Omega_{\ell} | \hat H_{\rm ren}(\x) \Omega_\ell \rangle + \frac{1}{2} \left( (\partial_t \phi_f(\x))^2 + (\partial_z \phi_f(\x))^2 + m^2 (\phi_f(\x))^2 \right), \label{Hground} \\
\langle \hat H_{\rm ren}(\x) \rangle_\beta^F & = \langle \hat H_{\rm ren}(\x) \rangle_\beta + \frac{1}{2} \left( (\partial_t \phi_f(\x))^2 + (\partial_z \phi_f(\x))^2 + m^2 (\phi_f(\x))^2 \right). \label{HKMS}  
\end{align}
\end{subequations}

\section{The Casimir effect for quantum field theory with dynamical boundary conditions in coherent states}
\label{sec:DynBC}

Recall that we  assume that  $ A >0.$  In particular, this is true if the condition of prop. \ref{prop:pos} are met. Henceforth, we shall assume  that the potential $V(z) = 0$.

The details of the quantization of system \eqref{Dyn} or equivalently \eqref{w.5} with $V(z) = 0$ and under the assumption that $A>0$  are presented in detail in our work \cite{Juarez-Aubry:2020psk} at zero and finite, positive temperature in the cases in which there is a Dirichlet ($\alpha = 0$) or Robin ($\alpha \neq 0$) boundary condition at $z = 0$, and the bulk and boundary local Casimir energies and renormalized local state polarizations are obtained in each case. For the purposes of this paper, it suffices for us to quote the results obtained in \cite{Juarez-Aubry:2020psk} for the renormalized local state polarization and local Casimir energy at zero and finite, positive temperature, as this is all we need in order to compute the renormalized local state polarization and local Casimir energy for coherent and thermal coherent states respectively, in view of eq. \eqref{HCoherent}. These results appear in Appendix \ref{App:OldResults}.

We now obtain the Casimir effect in coherent and thermal coherent, in the ground state and at positive, finite temperature, both in the case of a Dirichlet boundary condition  at $z=0$ in sec. \ref{subsec:Diri} and for a Robin boundary condition at $z = 0$ in sec. \ref{subsec:Robin}.  Note that the vacuum state $\Omega_\ell$ does not depend on the boundary condition. Anyhow, we  use below the notation $\Omega_\ell^{ ({\rm D })},$   $\Omega_\ell^{ ({\rm R })}$ to emphasize that we consider, respectively,  the  Dirichlet or the Robin boundary condition at $z=0.$

\subsection{Dirichlet boundary condition at $z = 0$}
\label{subsec:Diri}

For Dirichlet boundary condition at $z = 0$, the ground state two-point Wightman   function is
\begin{align}
\label{WightmanDiri}
\langle \Omega_\ell^{({\rm D})} | \hat \Phi(t,x) \hat \Phi(t',x') \Omega_\ell^{({\rm D})} \rangle = \sum_{n = 1}^\infty \frac{1}{2\omega_n^{\rm D}} \ee^{-\ii \omega_n^{\rm D}(t-t')} \Psi_n^{({\rm D})} (z) \otimes \Psi_n^{({\rm D})} (z'),
\end{align}
where $\Omega_\ell^{\rm (D)}$ is the ground state in Fock space, and where $(\omega_n^{\rm D})^2 = (s_n^{\rm D})^2 + m^2$ are the eigenvalues of the operator $A$ \eqref{opA} \eqref{DomA} with $\alpha = 0$ (see \cite[App. B]{Juarez-Aubry:2020psk} for asymptotic estimates of the eigenvalues at large $n$) and with eigenfunctions (normalised with respect to the inner product \eqref{Inner}) given by 
\begin{align}
\Psi_n^{(\rm D)}(z) & = \begin{pmatrix}
				\psi_n^{({\rm D})}(z) \\ 
				\psi_n^{\partial \, ({\rm D})}
\end{pmatrix} 
 = \mathcal{N}_n^{\rm D} 
	\begin{pmatrix}
          -\sin(s_n^{\rm D} z)  \\
           -\beta_1' \sin(\ell s_n^{\rm D}) + \beta_2' s_n^{\rm D} \cos(\ell s_n^{\rm D})
    \end{pmatrix}, \nonumber \\
\mathcal{N}_n^{\rm D} & = \left(\frac{\ell}{2} + \frac{2 \rho (s_n^{\rm D})^2 - (\beta_1' (s_n^{\rm D})^2 + (\beta_1+ \beta_1' m^2))(\beta_2' (s_n^{\rm D})^2 + (\beta_2+ \beta_2' m^2))}{2 \left[ (s_n^{\rm D})^2(\beta_2' (s_n^{\rm D})^2 + (\beta_2+ \beta_2' m^2))^2 + (\beta_1' (s_n^{\rm D})^2 + (\beta_1+ \beta_1' m^2))^2 \right]} \right)^{-1/2}.
\label{DiriEigen}
\end{align}

At finite temperature $T = 1/\beta > 0$, the  two-point Wightman  function is given by
\begin{align}\label{we.100}
\langle\hat \Phi(t,z)  \hat \Phi(t',z') \rangle_{\beta,{\rm D}} &=  \langle \Omega_\ell^{\rm (D)} | \hat \Phi(t,z) \hat \Phi(t',z') \Omega_\ell^{\rm (D)} \rangle + \sum_{n = 1}^\infty \,\frac{1}{2 \omega_n^{\rm (D)}} \,\Psi_n^{\rm (D)} (z)  \otimes \Psi_n^{\rm (D)} (z')\, \,\frac{1}{\ee^{\beta \omega_j^{\rm D}} - 1}\,\left(\ee^{-\ii \omega_n^{\rm D}(t-t')} + \ee^{\ii \omega_n^{\rm D}(t-t')} 
\right).
\end{align}

Both at zero and finite temperature, the bulk renormalized local state polarization and local Casimir energy are computed making use of the bulk-bulk component of \eqref{WightmanDiri} or \eqref{we.100}. For example, for the two-point Wightman  function \eqref{WightmanDiri}, this is 
\begin{align}
\langle \Omega_\ell^{({\rm D})} | \hat \Phi^{\rm B}(t,z) \hat \Phi^{\rm B}(t',z') \Omega_\ell^{({\rm D})} \rangle & = \sum_{n = 1}^\infty \frac{1}{2\omega_n^{\rm D}} \ee^{-\ii \omega_n^{\rm D}(t-t')} \psi_n^{({\rm D})} (z) \psi_n^{({\rm D})} (z'),
\end{align}
and similarly for the two-point Wightman function at finite temperature \eqref{we.100}. For the boundary state polarization and Casimir energy at zero or finite temperature, one makes use the boundary-boundary component of \eqref{WightmanDiri} or \eqref{we.100}. For example, for the  two-point Wigthman function \eqref{WightmanDiri}, this is
\begin{align}
\langle \Omega_\ell^{({\rm D})} | \hat \Phi^\partial(t) \hat \Phi^\partial(t') \Omega_\ell^{({\rm D})} \rangle & = \sum_{n = 1}^\infty \frac{1}{2\omega_n^{\rm D}} \ee^{-\ii \omega_n^{\rm D}(t-t')} \psi_n^{\partial \, ({\rm D})} \psi_n^{\partial \, ({\rm D})},
\end{align}
and similarly for the two-point Wightman function at finite temperature \eqref{we.100}. The commutator in the Dirichlet case  is given by, 
\beq\label{causalD}
\mathsf E_{\rm D} \varphi (t,z):= \int_{\mathbb R}\, dt'\, dz'\, E_{\rm D} \left( (t,z), (t',z') \right)\, \varphi(t',z'),
\ene
with integral kernel,

\begin{align}
E_{\rm D} \left( (t,z), (t',z') \right) = -\sum_{n = 1}^\infty \frac{1}{\omega_n^{\rm D}} \sin\left(\omega_n^{\rm D}(t-t')\right)  |\Psi_n^{({\rm D})} (z) \rangle\,\langle \Psi_n^{({\rm D})} (z')|.
\label{AdvRetDiri}
\end{align}

\subsubsection{  Renormalized local state polarization in coherent and thermal coherent states}

Let $f =  \begin{pmatrix} 
           f_1   \\
           f_2  \\
         \end{pmatrix} \in C_{0, R}^\infty(\mathbb{R}, \mathcal H)$, then $F^{\rm (D)} = \ee^{\ii \hat \Phi(f)} \Omega_\ell^{\rm (D)}$ defines a coherent state in the case of a Dirichlet boundary condition at $z = 0$ ($\alpha = 0$) around the classical solution $\phi_f^{\rm D} = E_{\rm D} f$, which can be written in detail as 
\begin{align}
\phi_f^{\rm D}(t,z) = -\int_{\mathbb{R}} \! \dd t' \, \sum_{n = 1}^\infty \frac{1}{\omega_n^{\rm D}} \sin\left(\omega_n^{\rm D}(t-t')\right) \begin{pmatrix}
				\psi_n^{({\rm D})}(z) \left( \int_0^\ell \! \dd z' \, \psi_n^{({\rm D})}(z') f_1(t',z') + \rho^{-1} \psi_n^{\partial \, ({\rm D})} f_2(t') \right) \\ 
				\psi_n^{\partial \, ({\rm D})} \left( \int_0^\ell \! \dd z' \, \psi_n^{({\rm D})}(z') f_1(t',z') + \rho^{-1} \psi_n^{\partial \, ({\rm D})} f_2(t') \right)
\end{pmatrix}. 
\label{DiriClass}
\end{align}

The bulk and boundary renormalized local state polarizations in the coherent state $F^{\rm (D)}$ are given respectively by (see \eqref{Coh0} and
\eqref{ww.zz})
\begin{subequations}
\label{DiriPolaCoh}
\begin{align}
\langle F^{(\rm D)} | ( \hat \Phi^{\rm B}_{\rm ren} )^2 (t,z) F^{(\rm D)} \rangle & = \langle \Omega^{(\rm D)} | ( \hat \Phi^{\rm B}_{\rm ren} )^2 (t,z) \Omega^{(\rm D)} \rangle + \left((\phi_{f}^{\rm D})_1(t,z)\right)^2, \\
\langle F^{(\rm D)} | ( \hat \Phi^{\partial}_{\rm ren} )^2 (t) F^{(\rm D)} \rangle & = \langle \Omega^{(\rm D)} | ( \hat \Phi^{\partial}_{\rm ren} )^2 (t) \Omega^{(\rm D)} \rangle + \left((\phi_{f}^{\rm D})_2(t)\right)^2,
\end{align}
\end{subequations}
where we used that for the ground state the one-point function is zero. The bulk and boundary renormalized local  state polarizations  in the ground state are given in app \ref{App:OldResults}, eq. \eqref{DirichletVacuumBulk} and \eqref{DirichletVacuumBound} respectively.  Here, $(\phi_{f}^{\rm D})_1$ makes reference to the first component of $\phi_{f}^{\rm D}$ and $(\phi_{f}^{\rm D})_2$ to the second one.

A thermal coherent state around the solution $\phi_f^{\rm D}$ with a Dirichlet boundary condition at $z = 0$ has bulk and boundary two-point Whigtman functions given respectively by (see \eqref{CohT})
\begin{subequations}
\begin{align}
\langle \hat \Phi^{\rm B}(t,z) \hat \Phi^{\rm B}(t',z') \rangle_{\beta,{\rm D}}^F & = \langle \hat \Phi^{\rm B}(t,z) \hat \Phi^{\rm B}(t',z') \rangle_{\beta, {\rm D}} +( \phi_f^{\rm D})_1(t,z)(\phi_f^{\rm D})_1(t',z'), \\
\langle \hat \Phi^{\partial}(t) \hat \Phi^{\partial}(t') \rangle_{\beta,{\rm D}}^F & = \langle \hat \Phi^{\partial}(t) \hat \Phi^{\partial}(t') \rangle_{\beta, {\rm D}} + (\phi_f^{\rm D})_2(t)(\phi_f^{\rm D})_2(t'),
\end{align}
where we used that for the Gibbs state the one-point function is zero. The two-point Wightman function $\langle \hat \Phi(\x) \hat \Phi(\x') \rangle_{\beta, {\rm D}}$ is given by \eqref{we.100}.
\end{subequations}

The bulk and boundary renormalized local state polarizations in the thermal coherent state $\langle \cdot \rangle_{\beta,{\rm D}}^F$ are given respectively by (see  \eqref{ww.zz})
\begin{subequations}
\label{DiriPolaCohT}
\begin{align}
\langle ( \hat \Phi^{\rm B}_{\rm ren})^2(t,z) \rangle_{\beta,{\rm D}}^F& = \langle ( \hat \Phi^{\rm B}_{\rm ren})^2(t,z) \rangle_{\beta,{\rm D}} + \left((\phi_{f}^{\rm D})_1(t,z)\right)^2, \\
\langle ( \hat \Phi^{\partial}_{\rm ren})^2(t) \rangle_{\beta,{\rm D}}^F & = \langle ( \hat \Phi^{\partial}_{\rm ren})^2(t) \rangle_{\beta,{\rm D}} + \left((\phi_{f}^{\rm D})_2(t)\right)^2,
\end{align}
\end{subequations}
where the bulk and boundary renormalized local state polarization at finite temperature are given in app. \ref{App:OldResults}, eq. \eqref{TBulkPolaD} and \eqref{TBoundPolaD}
 respectively.

\subsubsection{Local Casimir energy in coherent and thermal coherent states}

We now address the local Casimir energy with a Dirichlet boundary condition at $z = 0$ for coherent and thermal coherent states around the classical solution $\phi_f^{\rm D}$ \eqref{DiriClass}. 

To define the renormalized local Casimir energy we need to differentiate the solution $\phi_f,$ cf. formulae \eqref{HCoherent}. For this purpose we need to assume more regularity on $f.$ To  make precise the regularity that we need we introduce the graph norm of $\sqrt{A}$,
\beq
\|  \varphi \|_{\sqrt{A}}:= \|\varphi\|_\mathcal H+ \| \sqrt{A}\varphi\|_{\mathcal H}, \qquad   \varphi \in D(\sqrt{A}).
\ene
The domain of $\sqrt{A},$ that we denote by $ D(\sqrt{A}),$ endowed with the graph norm is a Hilbert space.   The space of test functions that we need to differentiate  the $\phi_f$ is  
$C_{0,R}(\mathbb R, D(\sqrt{A})),$  that consists of all continuos functions from $\mathbb R$ into  $D(\sqrt{A})$ endowed with the graph norm, that have compact support in $\mathbb R,$ and take values in functions in $ D(\sqrt{A})$ with   real-valued components.  

\begin{prop}
Let $\phi_f^{\rm D} = E_{\rm D} f$ be as above a solution to eq. \eqref{w.5} with
 \beq
 f =  \begin{pmatrix} 
           f_1   \\
           f_2  \\
         \end{pmatrix} \in C_{0,R}(\mathbb R,  D(\sqrt{A})).
 \ene
   
Then, $\phi_f^{\rm D}(t,z)$ is continuously differentiable in $t$ and $z,$ and the derivatives are uniformly bounded for $ t \in \mathbb R$ and $ z \in [0,\ell].$
\label{prop:RegDiri}
\end{prop}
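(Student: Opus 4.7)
The plan is to analyze $\phi_f^{\rm D}(t,\cdot)$ as an $\mathcal{H}$-valued function via the spectral calculus of the positive self-adjoint operator $A$, and then transfer the regularity to pointwise statements in $z$ through the one-dimensional Sobolev embedding. First I would use the addition formula $\sin\sqrt{A}(t-t') = \sin(\sqrt{A}\,t)\cos(\sqrt{A}\,t') - \cos(\sqrt{A}\,t)\sin(\sqrt{A}\,t')$ in \eqref{aaa.bbb}--\eqref{www.333} to recast
\begin{equation*}
\phi_f^{\rm D}(t,\cdot) = -\frac{\sin(\sqrt{A}\,t)}{\sqrt{A}}\,P + \frac{\cos(\sqrt{A}\,t)}{\sqrt{A}}\,Q,
\end{equation*}
with $P := \int_{\mathbb R}\cos(\sqrt{A}\,t')\,f(t',\cdot)\,dt'$ and $Q := \int_{\mathbb R}\sin(\sqrt{A}\,t')\,f(t',\cdot)\,dt'$ Bochner integrals in $\mathcal H$, well defined by the continuity and compact support of $f$ and by the uniform bounds $\|\sin(\sqrt{A}\,t)\|, \|\cos(\sqrt{A}\,t)\| \leq 1$ from the spectral theorem.

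Next, since $f \in C_{0,R}(\mathbb R, D(\sqrt{A}))$ and $\sqrt{A}$ commutes with its bounded functions, the maps $t' \mapsto \cos(\sqrt{A}\,t')\,\sqrt{A}\,f(t',\cdot)$ and $t' \mapsto \sin(\sqrt{A}\,t')\,\sqrt{A}\,f(t',\cdot)$ are continuous compactly supported $\mathcal H$-valued functions, so closedness of $\sqrt{A}$ gives $P, Q \in D(\sqrt{A})$ with $\sqrt{A}\,P$ and $\sqrt{A}\,Q$ equal to the corresponding Bochner integrals. Since $A \geq \omega_1^2 > 0$, the bounded operator $1/\sqrt{A}$ maps $D(\sqrt{A})$ into $D(A)$, and so $\phi_f^{\rm D}(t,\cdot) \in D(A)$ for every $t \in \mathbb R$, with $\sqrt{A}\,\phi_f^{\rm D}(t,\cdot) = -\sin(\sqrt{A}\,t)\,P + \cos(\sqrt{A}\,t)\,Q$ and $A\,\phi_f^{\rm D}(t,\cdot) = -\sin(\sqrt{A}\,t)\,\sqrt{A}\,P + \cos(\sqrt{A}\,t)\,\sqrt{A}\,Q$. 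The unitarity-type bounds on $\sin(\sqrt{A}\,t)$ and $\cos(\sqrt{A}\,t)$ will then deliver $t$-uniform bounds on $\|\phi_f^{\rm D}(t,\cdot)\|_{\mathcal H}$, $\|\sqrt{A}\,\phi_f^{\rm D}(t,\cdot)\|_{\mathcal H}$ and $\|A\,\phi_f^{\rm D}(t,\cdot)\|_{\mathcal H}$, and strong $t$-continuity of these operator families on $\mathcal H$ will make $t \mapsto \phi_f^{\rm D}(t,\cdot)$ continuous in the graph norm of $A$; a direct difference-quotient argument will then produce $\partial_t\phi_f^{\rm D}(t,\cdot) = -\cos(\sqrt{A}\,t)\,P - \sin(\sqrt{A}\,t)\,Q$, continuous in $t$ into $D(\sqrt{A})$ with a uniform graph-norm bound.

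To close the argument, I would identify the first (bulk) component of any $\varphi \in D(A)$ with an element of $H^2([0,\ell])$ and of any $\varphi \in D(\sqrt{A})$ with an element of $H^1([0,\ell])$, using the description \eqref{opA}--\eqref{DomA} and the quadratic form of $A$. The continuous one-dimensional Sobolev embeddings $H^2([0,\ell]) \hookrightarrow C^1([0,\ell])$ and $H^1([0,\ell]) \hookrightarrow C^0([0,\ell])$, together with the $t$-continuity and $t$-uniform graph-norm bounds of the previous step, would then give joint continuity and uniform boundedness of $\phi_f^{\rm D}(t,z)$, $\partial_z \phi_f^{\rm D}(t,z)$ and $\partial_t \phi_f^{\rm D}(t,z)$ on $\mathbb R \times [0,\ell]$. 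The hardest part will be the bookkeeping needed to identify $D(A)$ and $D(\sqrt{A})$ with ordinary Sobolev spaces for the bulk component (the boundary data being absorbed into $\varphi_2$), so that the Sobolev embedding can be invoked with a single uniform constant and the pointwise derivatives in $(t,z)$ can be identified with the evaluations of the abstract Hilbert-space derivatives.
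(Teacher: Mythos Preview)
Your proposal is correct and follows a genuinely different route from the paper's proof. The paper works directly with the eigenfunction expansion \eqref{DiriClass}: it rewrites the series by absorbing one factor of $\omega_n^{\rm D}$ into the Fourier coefficients (using $f\in D(\sqrt{A})$ so that $\{(\Psi_n,\sqrt{A}f)_{\mathcal H}\}_n\in\ell^2(\mathcal N)$), and then differentiates term by term in $t$ and $z$. The interchange is justified by dominated convergence together with the explicit uniform bounds $|\Psi_n^{({\rm D})}(z)|\leq C$ and $(\omega_n^{\rm D})^{-1}|\partial_z\psi_n^{({\rm D})}(z)|\leq C$, which come from the large-$n$ eigenvalue and eigenfunction asymptotics of \cite[eq.~(B.11)]{Juarez-Aubry:2020psk} combined with \eqref{DiriEigen}. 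So the paper's argument is concrete and short, but it leans on problem-specific spectral asymptotics.

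Your argument trades those explicit bounds for the abstract package spectral calculus $+$ Sobolev embedding. This is more robust (it would transfer to other positive self-adjoint $A$ of Sturm--Liouville type without recomputing eigenfunction asymptotics), but the price is exactly the step you flag as hardest: identifying the bulk component of $D(\sqrt{A})$ with $H^1([0,\ell])$ via the quadratic form of $A$. For the operator in \eqref{opA}--\eqref{DomA} this is indeed true---integrating $(\varphi,A\varphi)_{\mathcal H}$ by parts produces $\int_0^\ell|\partial_z\varphi_1|^2$ plus bounded boundary and potential terms, so the form norm controls the $H^1$ norm of $\varphi_1$---but the boundary coupling between $\varphi_1(\ell)$, $\partial_z\varphi_1(\ell)$ and $\varphi_2$ means the form-domain description is less immediate than for a standard Sturm--Liouville operator, and you should carry out that computation explicitly. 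Once that is done, your $H^2\hookrightarrow C^1$ and $H^1\hookrightarrow C^0$ embeddings finish the job exactly as you outline.
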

\begin{proof}
Since $f \in D(\sqrt{A})$, equation \eqref{DiriClass} can be written as follows,
\begin{align}
\phi_f^{\rm D}(t,z) = -\int_{\mathbb{R}} \! \dd t' \, \sum_{n = 1}^\infty \frac{1}{(\omega_n^{\rm D})^2} \sin\left(\omega_n^{\rm D}(t-t')\right) \begin{pmatrix}
				\psi_n^{({\rm D})}(z) \left( \int_0^\ell \! \dd z' \, \psi_n^{({\rm D})}(z')( \sqrt{A}f )_1(t',z') + \rho^{-1} \psi_n^{\partial \, ({\rm D})}( \sqrt{A}f)_2(t') \right) \\ 
				\psi_n^{\partial \, ({\rm D})} \left( \int_0^\ell \! \dd z' \, \psi_n^{({\rm D})}(z') (\sqrt{A}f)_1(t',z') + \rho^{-1} \psi_n^{\partial \, ({\rm D})} (\sqrt{A}f)_2(t') \right)
\end{pmatrix}, 
\label{DiriClass.1}
\end{align}
and, moreover,
\beq\label{dddd.ee}
\left\{ \int_0^\ell \! \dd z' \, \psi_n^{({\rm D})}(z')( \sqrt{A}f )_1(t',z') + \rho^{-1} \psi_n^{\partial \, ({\rm D})}( \sqrt{A}f)_2(t') \right\}_{n=1}^\infty \in
 \ell^2(\mathcal N),
\ene
with the norm in $\ell^2(\mathcal N)$ uniformly bounded in $t \in \mathbb R.$ Derivating \eqref{DiriClass.1} with respect to $t$ under the integral and summation signs we obtain, 
 \begin{align}
\frac{\partial}{\partial t}\phi_f^{\rm D}(t,z) = -\int_{\mathbb{R}} \! \dd t' \, \sum_{n = 1}^\infty \frac{1}{\omega_n^{\rm D}} \cos\left(\omega_n^{\rm D}(t-t')\right) 
\begin{pmatrix}
				\psi_n^{({\rm D})}(z) \left( \int_0^\ell \! \dd z' \, \psi_n^{({\rm D})}(z')( \sqrt{A} f )_1(t',z') + \rho^{-1} \psi_n^{({\rm D})}( \sqrt{A} f)_2(t') \right) \\ 
				\psi_n^{\partial({\rm D})} \left( \int_0^\ell \! \dd z' \, \psi_n^{({\rm D})}(z') (\sqrt{A} f)_1(t',z') + \rho^{-1} \psi_n^{\partial \, ({\rm D})} (\sqrt{A}f)_2(t') \right)
				\end{pmatrix}. 
\label{DiriClass.2}
\end{align}
The derivation under the integral an summation signs is justified by the Lebesgue dominated convergence theorem since  each component of  the integrand in \eqref{DiriClass.2} belongs to $\ell^1(\mathcal N)$ with norm in $\ell^1(\mathcal N)$ uniformly bounded in $t',$ and moreover it has compact support in $t'.$  This is true by   \eqref{dddd.ee},  and since  by  \eqref{DiriEigen} and equation (B.11) in \cite{Juarez-Aubry:2020psk}   $|\Psi^{(D)}_n(z)|$  is uniformly bounded in $ z \in [0,\ell],  n \in \mathcal N.$ This proves that $\phi_f^{\rm D}$ is continuously differentiable in $t$ with the derivative  uniformly bounded in  $ t \in \mathbb R$ and $ z \in [0,\ell].$ Similarly, derivating  the first component of \eqref{DiriClass.1} with respect to $z$ under the integral and summation signs we obtain, 
 \begin{align}
\frac{\partial}{\partial z}(\phi_f^{\rm D})_1(t,z) =\!\! -\!\!\int_{\mathbb{R}} \! \dd t' \, \sum_{n = 1}^\infty \frac{1}{(\omega_n^{\rm D})^2} \sin\left(\omega_n^{\rm D}(t-t')\right) \begin{pmatrix}
			\frac{\partial}{\partial z}\psi_n^{({\rm D})}(z) \left( \int_0^\ell \! \dd z' \, \psi_n^{({\rm D})}(z')( \sqrt{A}f )_1(t',z') + \rho^{-1} \psi_n^{\partial \, ({\rm D})}( \sqrt{A}f)_2(t') \right)
\end{pmatrix}. 
\label{DiriClass.3}
\end{align}
We prove that the derivation under the integration and summation signs is justified as in the case of the time derivative, using that by  \eqref{DiriEigen} and equation (B.11) in \cite{Juarez-Aubry:2020psk}  $( \omega_n^{\rm D})^{-1}|\frac{\partial}{\partial z} \psi_n^{({\rm D})}(z)| $  is uniformly bounded  in $ z \in [0,\ell],  n \in \mathcal N.$ This proves that $\phi_f$ is continuously differentiable in $z$ with the derivative  uniformly bounded $ t \in \mathbb R$ and $ z \in [0,\ell].$
\end{proof}

Using prop. \ref{prop:RegDiri} we can safely differentiate $\phi_f^{\rm D}$ and define
\begin{align}
H^{\phi_f^{\rm D}}(t,z) := \frac{1}{2} \left( \begin{pmatrix} 
           (\partial_t (\phi_{f}^{\rm D})_1(t,z))^2   \\
             |\beta_1'| (\partial_t (\phi_{f}^{\rm D})_2(t))^2  \\
         \end{pmatrix} + \begin{pmatrix} 
           (\partial_z (\phi_{f}^{\rm D})_1(t,z))^2  \\
           0  \\
         \end{pmatrix} + \begin{pmatrix} 
           m^2 ((\phi_{f}^{\rm D})_1(t,z))^2   \\
           - (\text{\rm sign} \,\beta'_1 )\,  \beta_2 ((\phi_{f}^{\rm D})_2(t))^2  \\
         \end{pmatrix}  \right) := \begin{pmatrix} 
           H^{(\phi_{f}^{\rm D})_1}(t,z)   \\
           H^{(\phi_{f}^{\rm D})_2}(t)  \\
         \end{pmatrix}.
\label{HClassDiri}
\end{align}

The appearance of the factors $\beta_1$ and $\beta_2$ in the definition of $H^{\phi_{f_2}^{\rm D}}(t,z)$ follows from the interpretation of the coefficients explained below eq. \eqref{Dyn}.

The bulk and boundary local Casimir energies  in the ground state are given respectively by (see \eqref{HCoherent}
)
\begin{subequations}
\label{DiriHCoh}
\begin{align}
 \langle F^{({\rm D})} |  \hat H^{{\rm B}}_{\rm ren}(t,z) F^{({\rm D})} \rangle & = \langle \Omega_\ell^{({\rm D})} |  \hat H^{{\rm B}}_{\rm ren}(t,z) \Omega_\ell^{({\rm D})} \rangle + H^{(\phi_{f}^{\rm D})_1}(t,z), \\
\langle F^{({\rm D})} | \hat H^{\partial}_{\rm ren}(t) F^{({\rm D})} \rangle & = \langle \Omega_\ell^{({\rm D})} | \hat H^{\partial}_{\rm ren}(t) \Omega_\ell^{({\rm D})} \rangle + H^{(\phi_{f}^{\rm D})_2}(t) ,
\end{align}
\end{subequations}
where $\langle \Omega_\ell^{({\rm D})} |  \hat H^{{\rm B}}_{\rm ren}(t,z) \Omega_\ell^{({\rm D})} \rangle$ is given by eq. \eqref{HDiriBulk} and $\langle \Omega_\ell^{({\rm D})} | \hat H^{\partial}_{\rm ren}(t) \Omega_\ell^{({\rm D})} \rangle$ is given by eq. \eqref{DiriHBound} in app. \eqref{App:OldResults}.

At finite, positive temperature $T = 1/\beta$, the bulk and boundary local Casimir energies are given respectively by (see \eqref{HCoherent})

\begin{subequations}
\label{DiriHCohT}
\begin{align}
\langle \hat H^{{\rm B}}_{\rm ren}(t, z) \rangle_{\beta, {\rm D}}^F & = \langle \hat H^{{\rm B}}_{\rm ren}(t, z) \rangle_{\beta, {\rm D}}  + H^{(\phi_{f}^{\rm D})_1}(t,z), \\
\langle  \hat H^{\partial}_{\rm ren}(t) \rangle_{\beta, {\rm D}}^F & = \langle  \hat H^{\partial}_{\rm ren}(t) \rangle_{\beta, {\rm D}} + H^{(\phi_{f}^{\rm D})_2}(t).
\end{align}
\end{subequations}
where $\langle \hat H^{{\rm B}}_{\rm ren}(t, z) \rangle_{\beta, {\rm D}}$ is given by eq. \eqref{HTbulkD} and $\langle  \hat H^{\partial}_{\rm ren}(t) \rangle_{\beta, {\rm D}}$ is given by eq. \eqref{HTboundaryD} in app. \eqref{App:OldResults}.

\subsection{Robin boundary condition at $z = 0$}
\label{subsec:Robin}

For Robin boundary conditions at $z = 0$, the ground state two-point  Wightman function is
\begin{align}
\label{WightmanRobin}
\langle \Omega_\ell^{({\rm R})} | \hat \Phi(t,x) \hat \Phi(t',x') \Omega_\ell^{({\rm R})} \rangle = \sum_{n = 1}^\infty \frac{1}{2\omega_n^{\rm R}} \ee^{-\ii \omega_n^{\rm R}(t-t')} \Psi_n^{({\rm R})} (z) \otimes \Psi_n^{({\rm R})} (z'),
\end{align}
where $\Omega_\ell^{\rm (R)}$ is the ground state in Fock space, and where $(\omega_n^{\rm R})^2 = (s_n^{\rm R})^2 + m^2$ are the eigenvalues of the operator $A$ with $\alpha \neq 0$,  (see \cite[app. B]{Juarez-Aubry:2020psk} for asymptotic estimates of the eigenvalues at large $n$) and with eigenfunctions (normalized with respect to the inner product \eqref{Inner}) given by
\begin{align}
& \Psi_n^{(\rm R)}(z)  = \begin{pmatrix}
				\psi_n^{({\rm R})}(z) \\ 
				\psi_n^{\partial \, ({\rm R})}
\end{pmatrix} = \mathcal{N}_n^{\rm R} 
	\begin{pmatrix}
       \ds   - \frac{\cos \alpha \sin (s_n^{\rm R} z)}{s_n^{\rm R}} +  \sin \alpha \cos( s_n^{\rm R} z)  \\ 
      \ds     \beta_1' \left(\sin \alpha  \cos (\ell s_n^{\rm R})-\frac{\cos \alpha  \sin (\ell s_n^{\rm R})}{s_n^{\rm R}}\right)+\beta_2' (\cos \alpha \cos (\ell s_n^{\rm R})+s_n^{\rm R} \sin \alpha  \sin (\ell s_n^{\rm R}))
    \end{pmatrix},  \nonumber \\
& \mathcal{N}_n^{\rm R}  := \left\{ -\frac{\sin (2 \alpha )}{4 (s_n^{\rm R})^2} - \left[4 (s_n^{\rm R})^2\Big((\beta_1+ \beta_1' m^2)^2+9(\beta_1')^2 (s_n^{\rm R})^4+2 (\beta_1 + \beta_1' m^2) \beta_1' (s_n^{\rm R})^2 \right. \right. \nonumber \\
& \left. \left. +(s_n^{\rm R})^2 \left((\beta_2 + \beta_2' m^2)+\beta_2' (s_n^{\rm R})^2\right)^2\right) \right]^{-1} \left(\left((s_n^{\rm R})^2-1\right) \cos (2 \alpha )-(s_n^{\rm R})^2-1\right) \nonumber \\
    &  \times \left[\ell \left((\beta_1+\beta_1' m^2)^2+\beta_1'^2 (s_n^{\rm R})^4+2 (\beta_1 + \beta_1' m^2) \beta_1' (s_n^{\rm R})^2+(s_n^{\rm R})^2 \left((\beta_2+\beta_2' m^2) +\beta_2' (s_n^{\rm R})^2\right)^2\right) \right.  \nonumber \\
    & \left. \left.  -(\beta_1+\beta_1' m^2) \left((\beta_2+\beta_2' m^2)+3 \beta_2' (s_n^{\rm R})^2\right)+\beta_1' (s_n^{\rm R})^2 \left((\beta_2+\beta_2' m^2) -\beta_2' (s_n^{\rm R})^2\right)\right] \right\}^{-1/2}.
\label{RobinEigen}
\end{align}

At finite temperature $T = 1/\beta > 0$, the two-point Wightman function is given by
\begin{align}\label{we.100R}
\langle\hat \Phi(t,z)  \hat \Phi(t',z') \rangle_{\beta,{\rm R}} &=  \langle \Omega_\ell^{\rm (R)} | \hat \Phi(t,z) \hat \Phi(t',z') \Omega_\ell^{\rm (R)} \rangle + \sum_{n = 1}^\infty \,\frac{1}{2 \omega_n^{\rm (R)}} \,\Psi_n^{\rm (R)} (z)  \otimes \Psi_n^{\rm (R)} (z')\, \,\frac{1}{\ee^{\beta \omega_j^{\rm R}} - 1}\,\left(\ee^{-\ii \omega_n^{\rm R}(t-t')} + \ee^{\ii \omega_n^{\rm R}(t-t')} 
\right).
\end{align}

Both at zero and finite temperature, the bulk renormalized local state polarization and local Casimir energy are computed making use of the bulk-bulk component of \eqref{WightmanRobin} or \eqref{we.100R}. For example, for the two-point  Wightman function \eqref{WightmanRobin}, this is 
\begin{align}
\langle \Omega_\ell^{({\rm R})} | \hat \Phi^{\rm B}(t,z) \hat \Phi^{\rm B}(t',z') \Omega_\ell^{({\rm R})} \rangle & = \sum_{n = 1}^\infty \frac{1}{2\omega_n^{\rm R}} \ee^{-\ii \omega_n^{\rm R}(t-t')} \psi_n^{({\rm R})} (z) \psi_n^{({\rm R})} (z'),
\end{align}
and analogously for the two-point Wightman function at finite temperature \eqref{we.100R}. For the boundary renormalized state polarization and Casimir energy at zero or finite temperature, one makes use the boundary-boundary component of \eqref{WightmanRobin} or \eqref{we.100R}. For example, for the two-point Wigthman function \eqref{WightmanRobin}, this is
\begin{align}
\langle \Omega_\ell^{({\rm R})} | \hat \Phi^\partial(t) \hat \Phi^\partial(t') \Omega_\ell^{({\rm R})} \rangle & = \sum_{n = 1}^\infty \frac{1}{2\omega_n^{\rm R}} \ee^{-\ii \omega_n^{\rm R}(t-t')} \psi_n^{\partial \, ({\rm R})} \psi_n^{\partial \, ({\rm R})},
\end{align}
and analogously for the two-point Wightman function at finite temperature \eqref{we.100R}.

The commutator  in the Robin case  is given by,
\beq\label{causalR}
\mathsf E_{\rm R} \varphi (t,z):= \int_{\mathbb R}\, dt'\, dz'\, E_{\rm R} \left( (t,z), (t',z') \right)\, \varphi(t',z'),
\ene
with the integral kernel,

\begin{align}
E_{\rm R} \left( (t,z), (t',z') \right) = -\sum_{n = 1}^\infty \frac{1}{\omega_n^{\rm R}} \sin\left(\omega_n^{\rm R}(t-t')\right) |\Psi_n^{({\rm R})} (z) \rangle  \langle\Psi_n^{({\rm R})} (z')|.
\label{AdvRetRobin}
\end{align}


\subsubsection{Renormalized local state polarization in coherent and thermal coherent states}

Let $f =  \begin{pmatrix} 
           f_1   \\
           f_2  \\
         \end{pmatrix} \in C_{0,R}(\mathbb{R}, \mathcal H)$, then $F^{(\rm R)} = \ee^{\ii \hat \Phi(f)} \Omega_\ell^{\rm (R)}$ defines a coherent state in the case of a Robin boundary condition at $z = 0$ ($\alpha \neq 0$) around the classical solution $\phi_f^{\rm R} = E_{\rm R} f$, which can be written in detail as 
\begin{align}
\phi_f^{\rm R}(t,z) = -\int_{\mathbb{R}} \! \dd t' \, \sum_{n = 1}^\infty \frac{1}{\omega_n^{\rm R}} \sin\left(\omega_n^{\rm R}(t-t')\right) \begin{pmatrix}
				\psi_n^{({\rm R})}(z) \left( \int_0^\ell \! \dd z' \, \psi_n^{({\rm R})}(z') f_1(t',z') + \rho^{-1} \psi_n^{\partial \, ({\rm R})} f_2(t') \right) \\ 
				\psi_n^{\partial \, ({\rm R})} \left( \int_0^\ell \! \dd z' \, \psi_n^{({\rm R})}(z') f_1(t',z') + \rho^{-1} \psi_n^{\partial \, ({\rm R})} f_2(t') \right)
\end{pmatrix} =: \begin{pmatrix}
(\phi_{f}^{\rm R})_1(t,z) \\
(\phi_{f}^{\rm R})_2(t)
\end{pmatrix}.
\label{RobinClass}
\end{align}

%
%
        
At zero temperature, the bulk and boundary renormalized local state polarizations in the coherent state $F^{\rm (R)}$ are given respectively by (see  \eqref{Coh0} and \eqref{ww.zz})
\begin{subequations}
\label{RobinPolaCoh}
\begin{align}
\langle F^{(\rm R)} | ( \hat \Phi^{\rm B}_{\rm ren} )^2 (t,z) F^{(\rm R)} \rangle & = \langle \Omega^{(\rm R)} | ( \hat \Phi^{\rm B}_{\rm ren} )^2 (t,z) \Omega^{(\rm R)} \rangle + \left((\phi_{f}^{\rm R})_1(t,z)\right)^2, \\
\langle F^{(\rm R)} | ( \hat \Phi^{\partial}_{\rm ren} )^2 (t) F^{(\rm R)} \rangle & = \langle \Omega^{(\rm R)} | ( \hat \Phi^{\partial}_{\rm ren} )^2 (t) \Omega^{(\rm R)} \rangle + \left((\phi_{f})^{\rm R})_2(t)\right)^2,
\end{align}
\end{subequations}
 where we used that for the ground state the one-point function is zero.The bulk and boundary renormalized local state polarizations are given by eq. \eqref{RobinVacuumBulk} and \eqref{RobinVacuumBoundary}, respectively, in app \ref{App:OldResults}.


The bulk and boundary renormalized local state polarizations in the thermal coherent state $\langle \cdot \rangle_{\beta,{\rm D}}^F$ at temperature $T = 1/\beta >0$ are given respectively by (see\eqref{Coh0} and  \eqref{ww.zz})
\begin{subequations}
\label{RobinPolaCohT}
\begin{align}
\langle ( \hat \Phi^{\rm B}_{\rm ren})^2(t,z) \rangle_{\beta,{\rm R}}^F & = \langle ( \hat \Phi^{\rm B}_{\rm ren})^2(t,z) \rangle_{\beta,{\rm R}} + \left((\phi_{f}^{\rm R})_1(t,z)\right)^2, \\
\langle ( \hat \Phi^{\partial}_{\rm ren})^2(t) \rangle_{\beta,{\rm R}}^F & = \langle ( \hat \Phi^{\partial}_{\rm ren})^2(t) \rangle_{\beta,{\rm R}} + \left((\phi_{f})^{\rm R})_2(t)\right)^2,
\end{align}
\end{subequations}
 where we used that for the Gibbs state the one-point function is zero.The  bulk and boundary renormalized local state polarizations at finite temperature are given in app. \ref{App:OldResults}, eq. \eqref{TBulkPolaR} and \eqref{TBoundPolaR},
 respectively.

\subsubsection{Local Casimir energy in coherent and thermal coherent states}

As in the case of the Dirichlet boundary condition, to define the  local Casimir energy we need to differentiate the solution $\phi_f,$ cf. formulae \eqref{HCoherent}, and we need to assume more regularity on $f.$ For this reason, as in the case of Dirichlet boundary  condition we take as  space of test functions
$C_{0,R}(\mathbb R, { D(\sqrt{A})}).$

\begin{prop}\label{prop:RegRobin}

Let $\phi_f^{\rm R} = E_{\rm R} f$ be as above a solution to eq. \eqref{w.5} (with $\alpha \neq 0$)  with
\beq
f =  \begin{pmatrix} 
           f_1   \\
           f_2  \\
         \end{pmatrix} \in C_{0,R}(\mathbb R,  D(\sqrt{A})).
\ene

Then, $\phi_f^{\rm R}(t,z)$ is continuously differentiable in $t$ and $z,$ and the derivatives are uniformly bounded for $ t \in \mathbb R$ and $ z \in [0,\ell].$
\end{prop}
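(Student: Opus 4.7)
The plan is to transcribe the argument of Proposition~\ref{prop:RegDiri} to the Robin setting, the only genuine change being the asymptotic control on the Robin eigenfunctions. Because $f \in C_{0,R}(\mathbb R, D(\sqrt{A}))$, I would first rewrite \eqref{RobinClass} by absorbing one factor of $\sqrt{A}$ into the coefficients, exactly as in \eqref{DiriClass.1}:
\begin{align*}
\phi_f^{\rm R}(t,z) = -\int_{\mathbb R} \dd t' \sum_{n=1}^\infty \frac{\sin\!\left(\omega_n^{\rm R}(t-t')\right)}{(\omega_n^{\rm R})^2} \begin{pmatrix} \psi_n^{({\rm R})}(z)\, c_n(t') \\ \psi_n^{\partial\,({\rm R})}\, c_n(t') \end{pmatrix},
\end{align*}
with $c_n(t') := \int_0^\ell \dd z'\, \psi_n^{({\rm R})}(z')\,(\sqrt{A} f)_1(t',z') + \rho^{-1} \psi_n^{\partial\,({\rm R})}(\sqrt{A} f)_2(t')$. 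Parseval's identity in $\ell^2(\mathcal N)$ then gives $\{c_n(t')\}_{n=1}^\infty \in \ell^2(\mathcal N)$ with norm bounded uniformly in $t'\in\mathbb R$ and with compact support in $t'$, since $\sqrt{A}\, f(t',\cdot)$ is continuous from $\mathbb R$ into $\mathcal H$ and compactly supported.

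Next I would differentiate the series termwise under the integral and summation signs. For $\partial_t \phi_f^{\rm R}$ the $n$-th summand picks up an overall factor $(\omega_n^{\rm R})^{-1}\cos(\omega_n^{\rm R}(t-t'))$, and for $\partial_z (\phi_f^{\rm R})_1$ the factor $(\omega_n^{\rm R})^{-2}\partial_z \psi_n^{({\rm R})}(z)\sin(\omega_n^{\rm R}(t-t'))$. Cauchy--Schwarz in $\ell^2(\mathcal N)$ bounds the termwise-differentiated series by
\begin{align*}
\|c_\cdot(t')\|_{\ell^2(\mathcal N)} \left(\sum_{n=1}^\infty \frac{|\Psi_n^{({\rm R})}(z)|^2}{(\omega_n^{\rm R})^2} \right)^{1/2} \quad \text{and} \quad \|c_\cdot(t')\|_{\ell^2(\mathcal N)} \left(\sum_{n=1}^\infty \frac{|\partial_z\psi_n^{({\rm R})}(z)|^2}{(\omega_n^{\rm R})^4} \right)^{1/2},
\end{align*}
respectively. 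Provided $|\Psi_n^{({\rm R})}(z)|$ and $(\omega_n^{\rm R})^{-1}|\partial_z\psi_n^{({\rm R})}(z)|$ are uniformly bounded in $z\in[0,\ell]$ and $n\in\mathcal N$, these estimates combined with the compact $t'$-support allow Lebesgue's dominated convergence theorem to legitimise the interchange of the derivative with the integral and the sum, to yield a continuous derivative, and to show that the derivative is uniformly bounded on $\mathbb R\times[0,\ell]$.

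The only nontrivial ingredient is thus the Robin analogue of estimate (B.11) of \cite{Juarez-Aubry:2020psk} used in the Dirichlet proof. This is obtained by inspecting \eqref{RobinEigen}: the unnormalised eigenfunction and its $z$-derivative are bounded on $[0,\ell]$ by an $n$-independent multiple of $s_n^{\rm R}$, while the asymptotic analysis of the Robin secular equation carried out in app.~B of \cite{Juarez-Aubry:2020psk} yields $\mathcal N_n^{\rm R} = O(1/s_n^{\rm R})$ as $n\to\infty$; together these give $|\Psi_n^{({\rm R})}(z)| = O(1)$ and $(\omega_n^{\rm R})^{-1}|\partial_z\psi_n^{({\rm R})}(z)| = O(1)$ uniformly in $z$. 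Verifying this asymptotic control is the main obstacle, since the Robin normalisation \eqref{RobinEigen} is considerably more intricate than its Dirichlet counterpart; once it is in place, the remainder of the proof is a word-for-word transcription of the argument for Proposition~\ref{prop:RegDiri}.
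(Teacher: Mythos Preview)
Your proposal is correct and is exactly the paper's approach: the paper's own proof consists of the single sentence ``The proof is analogous to the proof of prop.~\ref{prop:RegDiri}.'' One minor inaccuracy in your sketch of the Robin asymptotics: inspection of \eqref{RobinEigen} shows that the unnormalised bulk eigenfunction is already $O(1)$ (not merely $O(s_n^{\rm R})$) and that $\mathcal{N}_n^{\rm R}\to |\sin\alpha|^{-1}\sqrt{2/\ell}$, so $\mathcal{N}_n^{\rm R}=O(1)$ rather than $O(1/s_n^{\rm R})$; the uniform boundedness of the boundary component $\psi_n^{\partial({\rm R})}$ therefore rests not on a small normalisation constant but on the precise location of the eigenvalues supplied by the Robin analogue of estimate (B.11) in \cite{Juarez-Aubry:2020psk}. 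Your stated conclusions $|\Psi_n^{({\rm R})}(z)|=O(1)$ and $(\omega_n^{\rm R})^{-1}|\partial_z\psi_n^{({\rm R})}(z)|=O(1)$ are nonetheless correct, and with them the dominated-convergence argument goes through exactly as you describe.
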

\begin{proof} The proof is analogous to the proof of prop.~\ref{prop:RegDiri}.

\end{proof}

For the local Casimir energy with a Robin boundary condition at $z = 0$ in coherent and thermal coherent states around the classical solution $\phi_f^{\rm R}$ \eqref{RobinClass}, we define in analogy with the Dirichlet case
\begin{align}
H^{\phi_f^{\rm R}}(t,z) := \frac{1}{2} \left( \begin{pmatrix} 
           (\partial_t (\phi_{f}^{\rm R})_1(t,z))^2   \\
           |\beta_1'| (\partial_t (\phi_{f}^{\rm R})_2(t))^2  \\
         \end{pmatrix} + \begin{pmatrix} 
           (\partial_z( \phi_{f}^{\rm R})_1(t,z))^2  \\
           0  \\
         \end{pmatrix} + \begin{pmatrix} 
           m^2 ((\phi_{f}^{\rm R})_1(t,z))^2   \\
           -(\text{\rm sign}\,\beta'_1) \beta_2 ((\phi_{f}^{\rm R})_2(t))^2  \\
         \end{pmatrix}  \right) =: \begin{pmatrix} 
           H^{(\phi_{f})_1^{\rm R}}(t,z)   \\
           H^{(\phi_{f}^{\rm R})_2}(t)  \\
         \end{pmatrix}.
\label{HClassRobin}
\end{align}


The bulk and boundary local Casimir energies at zero temperature are given respectively by (see \eqref{HCoherent})
\begin{subequations}
\label{RobinHCoh}
\begin{align}
 \langle F^{({\rm R})} |  \hat H^{{\rm B}}_{\rm ren}(t,z) F^{({\rm R})} \rangle & = \langle \Omega_\ell^{({\rm R})} |  \hat H^{{\rm B}}_{\rm ren}(t,z) \Omega_\ell^{({\rm R})} \rangle + H^{(\phi_{f}^{\rm R})_1}(t,z), \\
\langle F^{({\rm R})} | \hat H^{\partial}_{\rm ren}(t) F^{({\rm R})} \rangle & = \langle \Omega_\ell^{({\rm R})} | \hat H^{\partial}_{\rm ren}(t) \Omega_\ell^{({\rm R})} \rangle + H^{(\phi_{f}^{\rm R})_2}(t) ,
\end{align}
\end{subequations}
where $\langle \Omega_\ell^{({\rm R})} |  \hat H^{{\rm B}}_{\rm ren}(t,z) \Omega_\ell^{({\rm R})} \rangle$ is given by eq. \eqref{RobinHBulk} and $\langle \Omega_\ell^{({\rm R})} | \hat H^{\partial}_{\rm ren}(t) \Omega_\ell^{({\rm R})} \rangle$ is given by eq. \eqref{RobinHBoundary} in app. \ref{App:OldResults}.

At finite, positive temperature $T = 1/\beta$, the bulk and boundary local Casimir energies are given respectively by (see \eqref{HCoherent})
\begin{subequations}
\label{RobinHCohT}
\begin{align}
\langle \hat H^{{\rm B}}_{\rm ren}(t, z) \rangle_{\beta, {\rm R}}^F  & = \langle \hat H^{{\rm B}}_{\rm ren}(t, z) \rangle_{\beta, {\rm R}}  + H^{(\phi_{f}^{\rm R})_1}(t,z), \\
\langle  \hat H^{\partial}_{\rm ren}(t) \rangle_{\beta, {\rm R}}^F & = \langle  \hat H^{\partial}_{\rm ren}(t) \rangle_{\beta, {\rm R}} + H^{(\phi_{f}^{\rm R})_2}(t),
\end{align}
\end{subequations}
where $\langle \hat H^{{\rm B}}_{\rm ren}(t, z) \rangle_{\beta, {\rm R}}$ is given by eq. \eqref{HTbulkR} and $\langle  \hat H^{\partial}_{\rm ren}(t) \rangle_{\beta, {\rm R}}$ is given by eq. \eqref{HTboundaryR} in app. \ref{App:OldResults}.

\section{The Casimir force}
\label{sec:Force}

Experimentalists are interested in measuring the Casimir force, which is defined as minus the derivative with respect to the interval length, $\ell$, of the expectation value of the integrated (bulk) Casimir energy \cite{Bordag, Fulling:1989nb}. In this section, we explore numerically the Casimir force for the ground state,  with a Dirichlet boundary condition at $z=0,$ which takes the form
\begin{align}
F_{\Omega_\ell^{\rm(D)}}(t,\ell) := - \partial_\ell \left( \int_0^\ell \dd z \langle \Omega_\ell^{\rm(D)} | \hat H_{\rm ren}^{\rm B}(t,z) \Omega_\ell^{\rm(D)} \rangle \right).
\end{align}
Afterwards, we shall comment on the Casimir force  at positive temperature and for coherent and thermal coherent states.

A first purpose of the section is to illustrate how this quantity may be computed numerically for our problem with dynamical boundary conditions. A second purpose will be to see how the Casimir force depends on the values of the coefficients of the problem. As we shall see, the force can be attractive or repulsive depending on these values.

For our first purpose, we consider, for concreteness, the case in which a Dirichlet boundary condition is set at $z = 0$, and set $\beta_1 = -1$, $\beta_1' = 1$, $\beta_2 = 1$, $\beta_2' = -1/2$ and $m^2 = 1$. The values chosen for the coefficients are so that the hyphoteses of Prop. \ref{prop:pos} hold. The ground state local Casimir energy in the bulk is then given by eq. \eqref{HDiriBulk} in app. \ref{App:OldResults}, and is time independent. Thus, the integrated Casimir  energy, given by
\begin{align}
E_{\Omega_\ell^{\rm(D)}}(\ell) := \int_0^\ell \dd z \langle \Omega_\ell | \hat H_{\rm ren}^{\rm B}(z) \Omega_\ell \rangle,
\end{align}
and also the Casimir force is  time independent.

In Fig. \ref{fig:Force} the interval $1 \leq \ell \leq 1.9$ is explored numerically with ten data points spaced evenly with a spacing of width $0.1$, namely we take $\ell \in \{1, 1.1, \ldots, 1.8, 1.9\}$. Each of the data points is a numerical approximation of the integrated Casimir energy by considering the first 50 eigenvalues in each case, but more precise numerical results can be obtained by computing a larger number of eigenvalues. Fig. \ref{fig:Force} provides evidence that in this example the integrated Casimir energy is a non-decreasing function of $\ell$, and as a consequence $F_{\Omega_\ell^{\rm(D)}}(\ell) \leq 0$, i.e., the Casimir force is attractive, as is the case in which Dirichlet boundary conditions are imposed at the two boundaries \cite{Fulling:1989nb}.

\begin{figure}[!ht]
\centering
\includegraphics[width=0.45\textwidth]{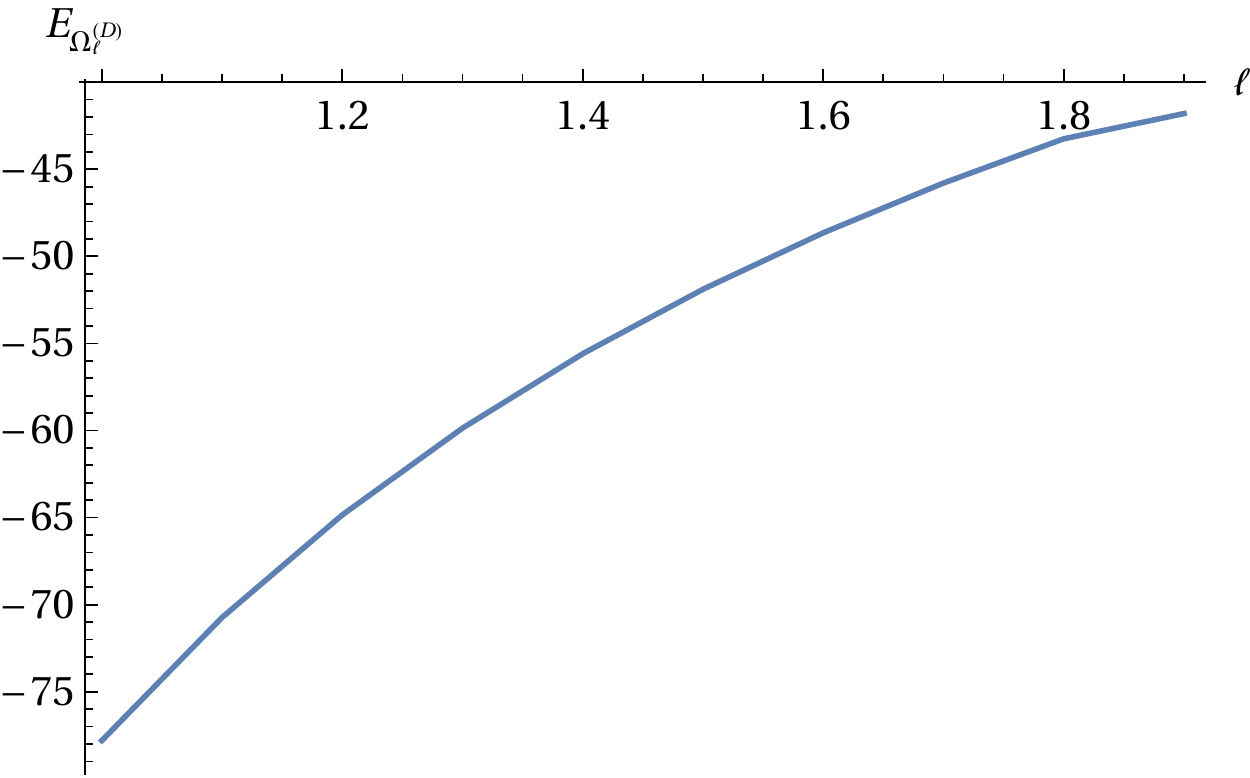}
\caption{Curve fitting of numerical data for the integrated Casimir energy with points at $\ell \in \{1, 1.1, \ldots, 1.9\}$.}
\label{fig:Force}
\end{figure}

Let us now explore the dependence of the Casimir force on the coefficients of the problem. We have already illustrated the massive bulk field case, and we shall now make the (natural) choice $m^2 = 0$ in order to also illustrate the massless situation. It can be seen from eq. \eqref{HDiriBulk} (Dirichlet at $z = 0$) and \eqref{RobinHBulk} (Robin at $z = 0$) that the local Casimir energy is $z$-independent. $z$-independence also occurs  for periodic boundary conditions \cite{Kay:1978zr} and for Dirichlet boundary conditions using the definition of the ``new" renormalized stress-energy tensor \cite{Fulling:1989nb}. Let us consider again, for concreteness, a Dirichlet boundary condition at $z = 0$. We quote for convenience eq. \eqref{HDiriBulk} in the massless case, 
\begin{align}
 \langle \Omega_\ell^{({\rm D})} |  \hat H^{{\rm B}}_{\rm ren} \Omega_\ell^{({\rm D})} \rangle & = \frac{\pi }{48  \ell^2} + \frac{\beta_1' }{2 \pi  \beta_2' \ell}   + \sum_{n = 1}^\infty  \left( \frac{(\mathcal{N}_n^{\rm D})^2 \omega^{\rm D}_n}{4} - \frac{\pi n }{2 \ell^2} + \frac{\pi }{4  \ell^2}  \right).
 \label{BulkCasMassless}
\end{align}

We shall explore numerically the large $\beta_2'$ case. There are two reasons for this. First, that the $\beta_2' \neq 0$ case is a novelty of this paper and of our previous work \cite{Juarez-Aubry:2020psk} with respect to the physics literature quoted in the Introduction, and it is natural to ask oneself what role this parameter plays on the numerical analysis. The second reason is of a more technical nature. The large eigenvalue estimates \cite[eq. (B.11)]{Juarez-Aubry:2020psk} in the massless case,
\begin{align}
\omega_n^{\rm D} = s_n^{\rm D}= \frac{(n-1/2)\pi}{\ell} - \frac{\beta_1'}{\beta_2' \pi (n-1/2)} + O(n^{-3}),
\label{DirichletEstimatesMassless}
\end{align}
become sharper for large $\beta_2'$, and hence numerical analysis becomes more trusworthy in this case. In eq. \eqref{DirichletEstimatesMassless} the coefficient of the $O(n^{-3})$ term is also $O\left((\beta_2')^{-1}\right)$.

We choose the coefficients $ m^2=0,   \beta_1 = 1/200$, $\beta_2 = -1$, $\beta_2' = 100$ and $\beta_1' = -1, -3, -5$. The choice of a small $\beta_1$ is required so that $\rho = \beta_1'\beta_2 - \beta_1 \beta_2' > 0$ whenever $\beta_2'$ is large. The values chosen for the coefficients are so that the hypotheses of Prop. \ref{prop:pos} hold. Fig. \ref{fig:Force2} explores numerically the interval $1 \leq \ell \leq 1.9$ in these three cases. Each curve contains ten data points for the integrated Casimir energy spaced evenly with a spacing of width $0.1$, namely $\ell \in \{1, 1.1, \ldots, 1.8, 1.9\}$. In each case, as before, the integrated Casimir energy is numerically approximated by considering the first 50 eigenvalues.

\begin{figure}[!ht]
\centering
\includegraphics[width=0.45\textwidth]{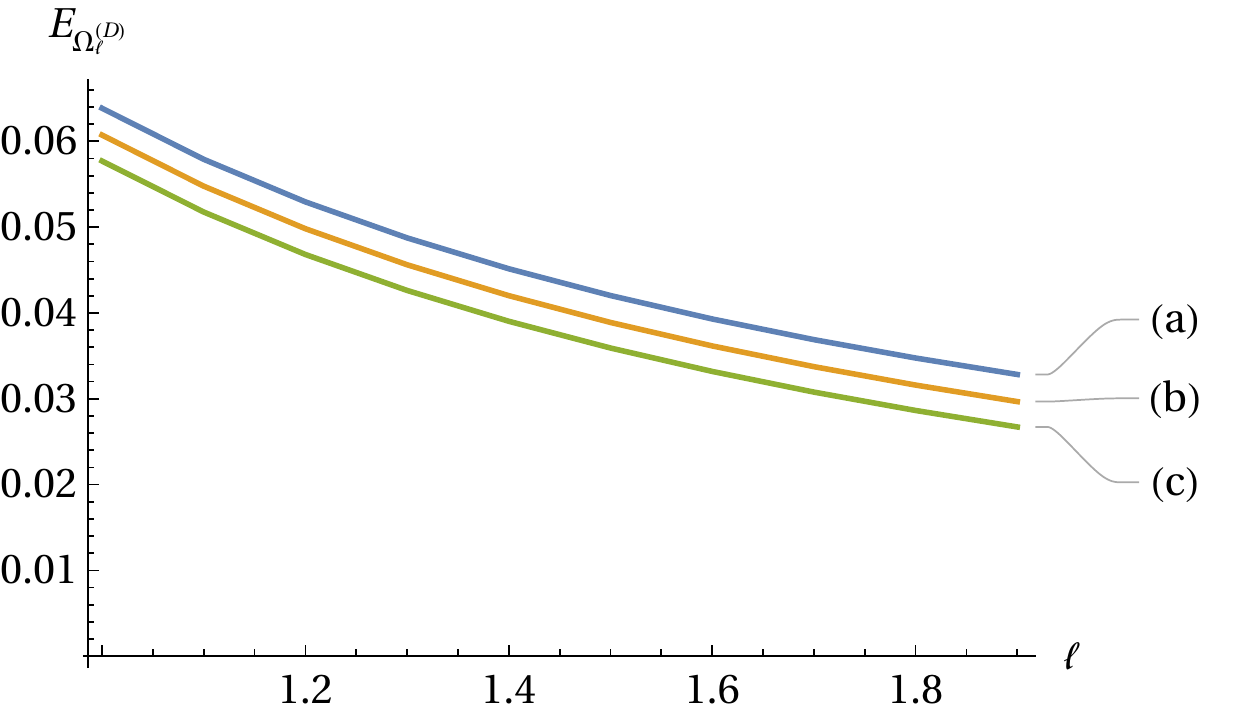}
\caption{Curve fitting of numerical data for the integrated Casimir energy with points at $\ell \in \{1, 1.1, \ldots, 1.9\}$ for (a) $\beta_1' = -1$, (b) $\beta_1' = -3$ and (c) $\beta_1' = -5$.}
\label{fig:Force2}
\end{figure}

It  follows from the numerical exploration presented in Fig. \ref{fig:Force2}  that for the choice $m^2 = 0$, $\beta_1 = 1/200$, $\beta_2 = -1$, $\beta_2' = 100$ and $\beta_1' = -1, -3, -5$ the Casimir force is repulsive. This is also the case notably for anti-periodic boundary conditions \cite{Asorey}.

Let us now consider the case of positive temperature, and as before the Dirichlet boundary condition at $z=0.$ The integrated  Casimir energy is given by,
\beq\label{energy. pos}
E_{\beta, D} := \int_0^\ell \dd z \, \langle \hat H^{{\rm B}}_{\rm ren}(t, z) \rangle_{\beta, {\rm D}},
\ene
where the local bulk Casimir energy $\langle \hat H^{{\rm B}}_{\rm ren}(t, z) \rangle_{\beta, {\rm D}} $ is given by \eqref{HTbulkD}. In Figure \ref{fig:Force3}  we present numerical results for $E_{\beta,D}.$  We choose the coefficients $ m^2=0,   \beta_1 = 1/200$, $\beta_2 = -1$, $\beta_2' = 100$ and $\beta_1' = -1$.  The values chosen for the coefficients are so that the hypotheses of Prop. \ref{prop:pos} hold. Fig. \ref{fig:Force3} explores numerically the interval $1 \leq \ell \leq 1.9$  for temperatures  $T=0, T=1/7, T=1/5,$ and $T=1/3.$  Each curve contains ten data points for the integrated Casimir energy at temperature $T,$ spaced evenly with a spacing of width $0.1$, namely $\ell \in \{1, 1.1, \ldots, 1.8, 1.9\}$. In each case, as before, the integrated Casimir energy at temperature $T$  is numerically approximated by considering the first 50 eigenvalues.  
\begin{figure}[!ht]
\centering
\includegraphics[width=0.45\textwidth]{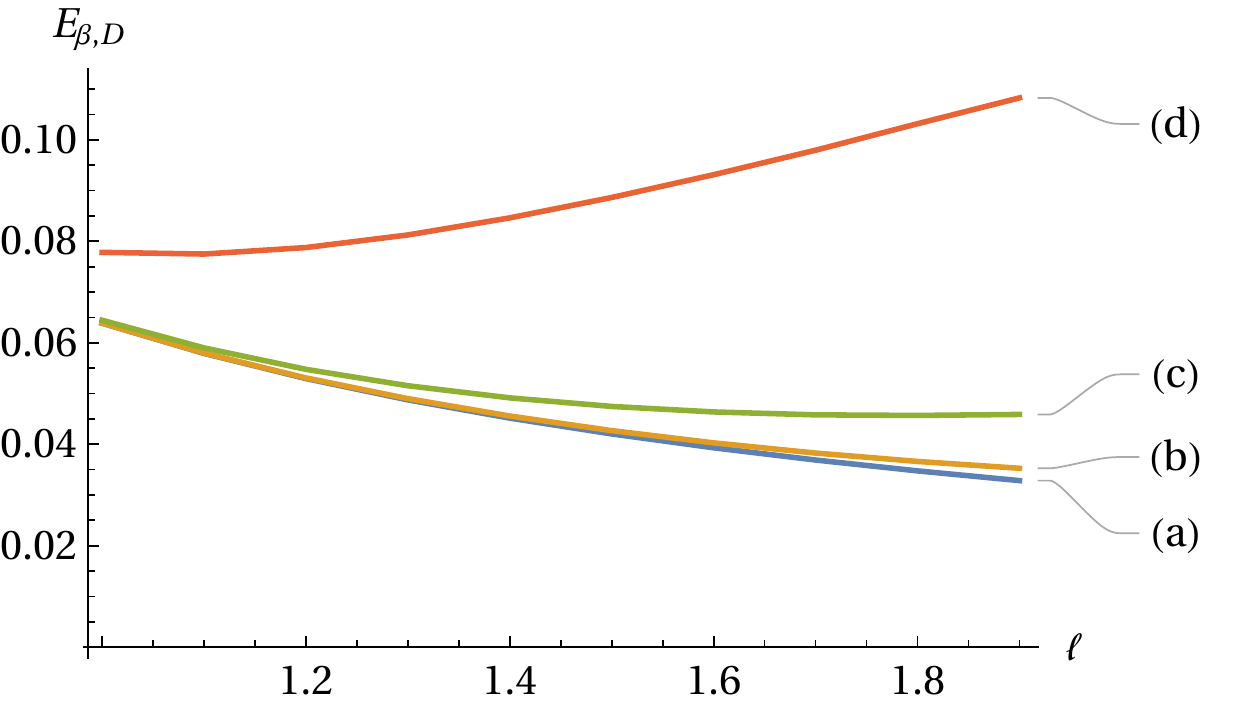}
\caption{Curve fitting of numerical data for the integrated Casimir energy at temperature $T$ with points at $\ell \in \{1, 1.1, \ldots, 1.9\}$ for (a) $T = 0$, (b) $T = 1/7,$  (c) $T= 1/5,$  and (d) T=1/3}
\label{fig:Force3}
\end{figure}
It follows from our numerical exploration in Fig. \ref{fig:Force3} that depending on the temperature $T$  and on the length $\ell,$ the Casimir force, $ - \partial_\ell E_{\beta,D}$ can be repulsive or attractive.

To end the section, we wish to discuss how the Casimir energies and forces are modified when one considers  coherent and thermal coherent states. 
For coherent states, and thermal coherent states, the details of the modifications to the Casimir energy and the Casimir force depend on the classical solution that defines the coherent state. However, it  can be seen from eq. \eqref{DiriHCoh}, \eqref{DiriHCohT}, \eqref{RobinHCoh}, and \eqref{RobinHCohT}   that the difference between the Casimir energy in a coherent state and the ground state, and that the difference between the Casimir energy in a thermal coherent state and a thermal state,   is the classical energy of the solution defining the coherent state, which is non-negative for all values of $\beta_1, \beta_2, \beta_1', \beta_2', m^2$ and $\ell.$

Let us now examine what happens to the Casimir energy and the Casimir force in the particular example of a single mode classical solution and mass zero, $m=0.$ For simplicity we take the Dirichlet boundary condition at  $z=0.$ We take a  function $h \in C_{0, R}(\mathbb{R})$ and consider the classical solution \eqref{DiriClass} with $ f= f_k:= h(t)\, \Psi^{(\rm D)}_k(z),$ for some $k=1,\dots.$   Since $\Psi^{(\rm D)}_k$ is an eigenfunction of $A,$ only the term with $n=k$ is different from zero in the series in \eqref{DiriClass} and we have,
\begin{align}
\phi^{\rm D}_{f_k}(t,z)  = - \int_\mathbb{R} \dd t' h(t') \frac{\sin(\omega_k^{\rm D}(t-t'))}{\omega_k^{\rm D}} \Psi_k^{\rm (D)}(z).
\label{SharpSoln}
\end{align}

Then, by \eqref{DiriHCoh} and \eqref{DiriHCohT}, the  contribution to the integrated Casimir energy given by the classical solution \eqref{SharpSoln} is given by,
\beq\label{appr.10}
E^{\phi^D_{f_k}}:=  \int_0^\ell \,\dd z  H^{(\phi^{\rm D}_{f_k})_1}.
\ene
{\color{black} In Figure \ref{fig:Force4}  we present numerical results for $ E^{\phi^D_{f_k}},$ for the lowest eigenvalue ($k=1$), $\omega_k = \omega_1 \approx 1.5771$, and for the function $h(t)= \cos t ,$ for $|t|\leq \pi/2,$ 
 and $h(t)=0,$ for $|t|\geq \pi/2.$ We choose the coefficients $ m^2=0,   \beta_1 = 1/200$, $\beta_2 = -1$, $\beta_2' = 100$ and $\beta_1' = -1$.  The values chosen for the coefficients are so that the hypotheses of Prop. \ref{prop:pos} hold. 
Each curve in Fig. \ref{fig:Force4} represents the (time-dependent) total energy of the classical solution \eqref{SharpSoln} at 
 times, $t = 0$,  $t = \pi/2,$ $t=\pi,$  $t= 3\pi/2$  and  $t= 2\pi$ as the interval length $\ell$ varies.}   
 \begin{figure}[!ht]
\centering
\includegraphics[width=0.45\textwidth]{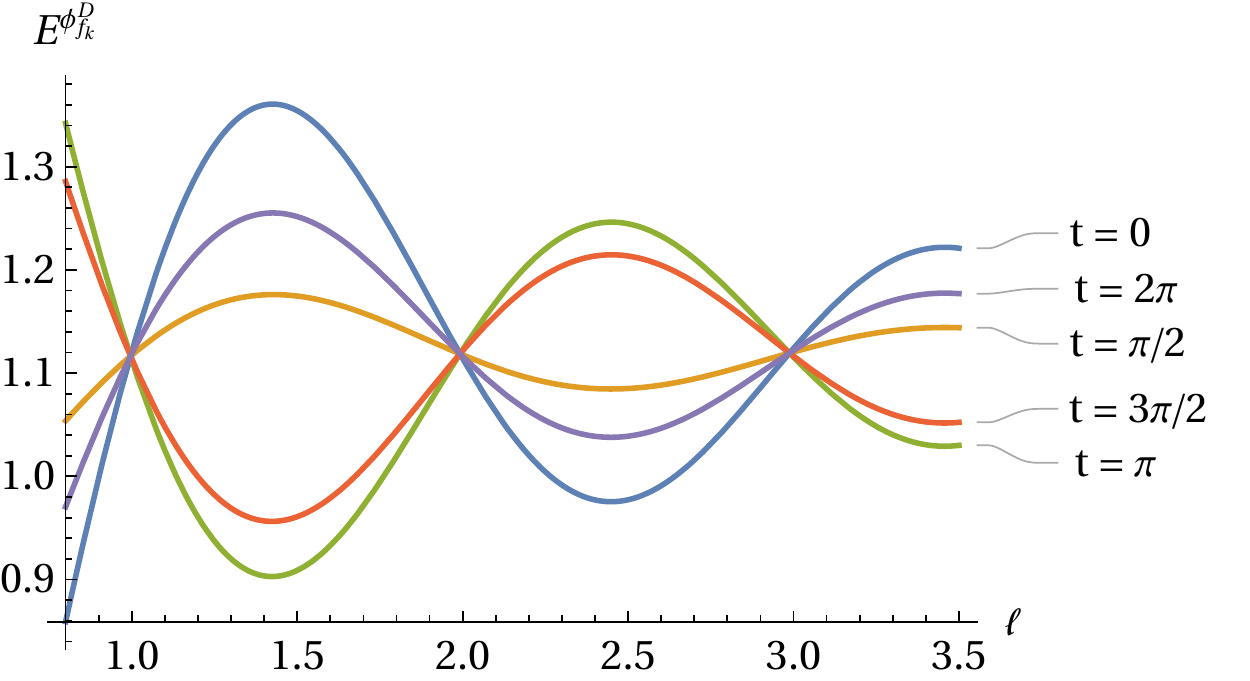}
\caption{Curve fitting of numerical data for the   contribution to the integrated Casimir energy given by the classical solution \eqref{SharpSoln} with points at $\ell \in \{1, 1.1, \ldots, 1.9\}$  for the lowest eigenvalue, $k=1,$ and for times, $t = 0$,  $t = \pi/2,$ $t=\pi,$  $t= 3\pi/2,$  and  $t= 2\pi.$}
\label{fig:Force4}
\end{figure}

It follows from our numerical exploration in Fig. \ref{fig:Force4} that depending on the time $t$  and on the length $\ell,$ the
contribution to the  Casimir force, $-\partial_\ell E^{\phi^D_{f_k}},$ given by the classical solution \eqref{SharpSoln}  can be repulsive or attractive.

\section{The causal propagator and the advanced and retarded Green's operators}
\label{sec:F-loc}
In this section we consider the  causal propagator and the advanced and retarded Green's operators. As we will see,  the integral operator, $\mathsf E,$  defined in \eqref{daaa.bbb} with the integral kernel $E(t-t')$  given  in \eqref{aaa.bbb}, is the causal  propagator for  our theory, when restricted to a convenient domain. Recall that  by \eqref{dd.zz.ww},  the integral operator  $\mathsf E$   gives the commutator formula for  the quantum fields.

We first introduce some appropriate definitions.  For every $\varphi \in D(A)$ we define the graph norm of $\varphi$ as follows,
\beq
\|\varphi\|_{D(A)}:= \|\varphi\|_\mathcal H + \| A \varphi\|_\mathcal H.
\ene

The domain of $A$ endowed with the graph norm is a Hilbert space. Below we always assume that $D(A)$ is endowed with the graph norm. For $j=1,\dots,$ we designate by 
$C^j(\mathbb R, D(A))$ the functions from $\mathbb R$ into $D(A)$ that have $j$ continuous derivatives, and  by  $C^j_0(\mathbb R, D(A))$ the functions in  $C^j(\mathbb R, D(A))$ that have compact support in $\mathbb R.$  Further, by 
$C(\mathbb R, D(A))$ we denote the continuos functions from $\mathbb R$ into $D(A),$ and  by  $C_0(\mathbb R, D(A))$ the functions in  $C(\mathbb R, D(A))$ that have compact support in $\mathbb R.$ 

For any set $O \subset  \mathbb R \times  [0,\ell]$ we denote by $J_\pm(O),$ respectively, the causal future and the causal past of $O,$ and by $J(O):= J_+(O) \cup J_-(O)$ the union of the causal past and future of $O.$  We designate by $ \mathcal B(\mathcal H)$ the Banach space of all bounded operators in $\mathcal H.$


 Further, we denote by $Q_A$ the Klein-Gordon operator,
\beq
Q_A:=\frac{\partial^2}{\partial_t^2}+ A.
\ene
\begin{defn}
We say that an operator, $\mathsf E,$ from $C^2_0(\mathbb R, D(A))$ into $C^2(\mathbb R, D(A)),$ is an integral operator, and that $E(t-t')$  is the integral kernel of  $ \mathsf E,$ if 
\beq
(\mathsf Ef)(t,z)= \int_{\mathbb R}\, dt'\,( E(t-t')\, f(t'))(z),
\ene
where $E(t), t \in \mathbb R,$ is a bounded operator on $\mathcal H.$ Furthermore,  we assume that  $E \in C^ 2(\mathbb R, \mathcal B(\mathcal H)).$       
\end{defn}
We now give our definition of causal propagator. 
    
\begin{defn}\label{causalprop}
An integral operator, $\mathsf E,$ from $C^2_0(\mathbb R, D(A))$  into  $C^2(\mathbb R, D(A))$ is a causal propagator if the following conditions are satisfied.
\begin{enumerate}[(i)]
\item 
\beq 
Q_A \mathsf E f =\mathsf  E Q_A f=0 , \qquad f \in C^2_0(\mathbb R, D(A)),
\ene
\item 
\beq
\text{\rm supp} \, \mathsf  (E f)_1 \subset  J ( {\rm supp}\,  f_1),  \qquad f \in C^2_0(\mathbb R, D(A)),
\ene
where for any function $g$ we denote by ${\rm supp}\, g$ the support of $g.$

\item The integral kernel of $\mathsf E$ satisfies,
\beq
E(0)=0, \qquad \partial_t E(0)= -I.
\ene
\end{enumerate}
\end{defn}
\begin{thm} \label{causal} Assume that   the assumptions of Proposition ~\ref{prop:pos} hold and that $ \beta_2 \neq 0.$ Then, the integral operator \eqref{daaa.bbb} with integral kernel \eqref{aaa.bbb},  from $C^2_0(\mathbb R, D(A))$  into  $C^2(\mathbb R, D(A))$ is a causal propagator.
\end{thm}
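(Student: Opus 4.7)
The plan is to verify the three defining properties of a causal propagator in Definition~\ref{causalprop} for the kernel $E(\tau) = -\sin(\sqrt{A}\,\tau)/\sqrt{A}$: (i) and (iii) will follow directly from spectral calculus on the positive self-adjoint operator $A$, while the causal support condition (ii) is the main technical content and will be handled by a retarded/advanced decomposition together with a finite-propagation-speed estimate.

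For conditions (i) and (iii) one works directly from the functional calculus of $A$. Since $A\geq \omega_1^2>0$ by Proposition~\ref{prop:pos}, the operator-valued map $\tau \mapsto E(\tau)$ is smooth in the operator-norm topology with $E''(\tau) = -A\,E(\tau)$, $E(0)=0$, and $\partial_\tau E(0) = -I$, so (iii) is immediate. Because $f \in C^2_0(\mathbb R, D(A))$ has compact support and $E(\tau)$ commutes with $A$ on $D(A)$, one may differentiate under the integral sign to see that $\mathsf E f(t)\in D(A)$ for each $t$ and that $Q_A \mathsf E f = \int_{\mathbb R} (\partial_t^2+A)E(t-t')f(t')\,dt' = 0$; integrating by parts twice in $t'$ with vanishing boundary terms gives $\mathsf E Q_A f = 0$, completing (i). For (ii), the plan is to decompose
\[
\mathsf E^{\mathrm{ret}} f(t) := \int_{-\infty}^{t} E(t-t')\,f(t')\,dt', \qquad \mathsf E^{\mathrm{adv}} f(t) := \int_{t}^{\infty} E(t-t')\,f(t')\,dt',
\]
so that $\mathsf E = \mathsf E^{\mathrm{ret}} + \mathsf E^{\mathrm{adv}}$. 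A direct differentiation using $E(0)=0$ and $\partial_\tau E(0)=-I$ yields $Q_A \mathsf E^{\mathrm{ret}} f = -f$ and $Q_A \mathsf E^{\mathrm{adv}} f = f$, so these are (up to sign) inhomogeneous Green's operators; moreover $\mathsf E^{\mathrm{ret}} f(t) = 0$ for $t$ below the time-support of $f$, hence $\mathsf E^{\mathrm{ret}}$ propagates data into the future and dually for $\mathsf E^{\mathrm{adv}}$. It then suffices to establish the sharp causal support estimates $\mathrm{supp}(\mathsf E^{\mathrm{ret}} f)_1 \subset J_+(\mathrm{supp}\,f_1)$ and $\mathrm{supp}(\mathsf E^{\mathrm{adv}} f)_1 \subset J_-(\mathrm{supp}\,f_1)$, whose union is $J(\mathrm{supp}\,f_1)$.

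The hard part will be closing the finite-propagation-speed estimate in the presence of the dynamical boundary condition at $z=\ell$. A useful preliminary observation is that the domain constraint $f_2 = \beta_1' f_1(\ell) - \beta_2' \partial_z f_1(\ell)$ built into \eqref{DomA} forces the support of $f_2$ (viewed as sitting on $\mathbb R \times\{\ell\}$) to lie within $\mathrm{supp}\,f_1$, so the boundary data does not enlarge the causal support. The plan is then a local energy estimate on a backward characteristic region emanating from an arbitrary test point $(t^*, z^*) \notin J(\mathrm{supp}\,f_1)$, reflected off $z=0$ (under Dirichlet/Robin BC) and $z=\ell$ (under the dynamical BC). Using the positivity of the inner product on $\mathcal H$ together with the coercivity hypotheses of Proposition~\ref{prop:pos} --- which make the bulk-plus-boundary energy density \eqref{HClassDiri} pointwise non-negative --- one obtains a conserved total energy whose localized version on the characteristic region vanishes whenever the source $f$ vanishes there, forcing $(\mathsf E^{\mathrm{ret}} f)_1(t^*, z^*) = 0$. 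Combined with the analogous estimate for $\mathsf E^{\mathrm{adv}}$, this proves (ii). The explicit hypothesis $\beta_2 \neq 0$ will enter here to ensure the dynamical BC genuinely couples the bulk trace $u_1(\ell)$ to its normal derivative $\partial_z u_1(\ell)$, making the localized energy identity at the dynamical boundary closable.
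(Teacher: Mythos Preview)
Your treatment of conditions (i) and (iii) via the spectral calculus is correct and matches the paper. For (ii) you pass through the retarded/advanced splitting $\mathsf E = \mathsf E^{\mathrm{ret}} + \mathsf E^{\mathrm{adv}}$, whereas the paper instead writes $(\mathsf E f)(t,\cdot) = \int_{\mathbb R} \phi_{t'}(t,\cdot)\,dt'$ with each $\phi_{t'}$ a homogeneous solution with Cauchy data $(0,-f(t',\cdot))$ at time $t'$, and proves finite propagation speed for each $\phi_{t'}$; the two reductions are equivalent (Duhamel), so this is not a substantive difference.

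There is, however, a genuine gap at the boundary $z=\ell$. You invoke the energy density \eqref{HClassDiri} as the functional whose localized monotonicity yields finite propagation speed. That functional does \emph{not} close the flux identity at the dynamical boundary: if one sets $E_\partial = \tfrac12\big(|\beta_1'|\dot\phi_2^2 - (\mathrm{sign}\,\beta_1')\beta_1\phi_2^2\big)$ and uses the boundary equation of motion together with $\phi_2 = \beta_1' u_1(\ell) - \beta_2'\partial_z u_1(\ell)$, one finds
\[
\frac{d}{dt}E_\partial \;=\; -(\mathrm{sign}\,\beta_1')\,\rho\,\big(\beta_1'\dot u_1(\ell) - \beta_2'\partial_z\dot u_1(\ell)\big)\,\partial_z u_1(\ell),
\]
which is not $-\dot u_1(\ell)\partial_z u_1(\ell)$ and hence does not cancel the bulk boundary flux appearing in \eqref{EnergyFuncBound1}. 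The paper instead constructs a tailored non-negative boundary functional $\epsilon_\ell[u_1]$ (eq.~\eqref{EnergyFunctionalBoundary}), built from $u_1(\ell)$, $\partial_z u_1(\ell)$ and their time derivatives, for which the identity $\frac{d}{dt}\epsilon_\ell = -\dot u_1(\ell)\partial_z u_1(\ell)$ holds exactly; combined with the bulk estimate \eqref{BEF2} this yields monotonicity of the localized energy on the half-diamond $D_{\rm R}(I)$. The hypothesis $\beta_2\neq 0$ is not needed, as you suggest, to ``make the coupling genuine'' --- Proposition~\ref{prop:pos} already forces $\beta_2\neq 0$ --- but rather because $\beta_2$ appears in the denominators of $\epsilon_\ell$ (the terms $-\beta_1/\beta_2$ and $(-\beta_2\beta_2')^{1/2}$), and the non-negativity of $\epsilon_\ell$ follows precisely from the sign conditions in Proposition~\ref{prop:pos}. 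An analogous, simpler functional $\epsilon_0$ handles the reflection at $z=0$ in the Robin case. Once you replace \eqref{HClassDiri} by $\epsilon_\ell$ (and $\epsilon_0$), your argument goes through and coincides with the paper's.
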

\begin{proof}
Condition (i) of Definition~ \ref{causalprop} is immediate from the functional calculus of self-adjoint operators, and assumption (iii) follows from the definition of $E(t).$

Note that,
\beq
(E f)(t,z) = \int_{\mathbb R}\, dt' \phi_{t'}(t,z), 
\ene
where for each fixed $ t' \in \mathbb R,$
\beq
\phi_{t'}(t,z):= (E(t-t')f(t'))(t,z).
\ene 
Observe that $\phi_{t'}(t,z)$ satisfies,
\beq\label{wxx.1}
Q_A \phi_{t'}(t,z)=0,
\ene
and that
\beq
\phi_{t'}(t',z)=0, \qquad   \partial_t \phi_{t'}(t',z)= - \phi(t',z). 
\ene

We prepare the following results that we use later in the proof.
\begin{enumerate}[(a)]

\item
Consider the causal diamond, $D(I),$ in  $\mathbb{R} \times [0, \ell]$ of width $I = [a,b] \subset [0, \ell]$ at $t = 0$, given by,
\beq
D(I):= \left\{ (t, z) \in [  \frac{a-b}{2},  \frac{b-a}{2}]\times [0,\ell]: z \in [a+|t|, b-|t|]\right\}.
\ene
 We show that if $(\phi_{0})_1(0,z)$  is identically zero on $I$, then $(\phi_{0})_1(t,z)$ vanishes on the whole of $D(I).$

To each constant-$t$ cross section of the diamond, $I(t):=[a+|t|, b-|t|] $, we  associate a positive energy functional
\begin{align}
\epsilon^{I(t)}_{\rm B}[(\phi_0)_1](t) & :=  \frac{1}{2} \Big(|| (\dot \phi_0)_1(t,z)||^2_{L^2(I(t), \dd z)} + ||( \partial_z \phi_0)_1(t,z) ||^2_{L^2(I(t), \dd z)} + m^2 ||( \phi_0)_1(t,z) ||^2_{L^2(I(t), \dd z)} \Big). \label{EnergyFunctionalBulk}
\end{align}

Consider the situation in which $I(t)$ is to the future of $I$, whence it takes the form $I(t) = [a+t, b-t]$ with $t> 0$. We compute the time derivative of $\epsilon^{I(t)}_{\rm B}[\phi_0](t)$ and show it is non-positive. To simplify the notation we denote, $u_1(t,z):=(\phi_0)_1(t,z).$ Then, we have

\begin{align}
\frac{\dd}{\dd t} \epsilon^{I(t)}_{\rm B}[u_1](t) & = \int_{a+t}^{b-t} \dd z \left( \dot u_1 \ddot u_1 + \partial_z \dot u_1 \partial_z u_1 + m^2 \dot u_1 u_1\right) - \frac{1}{2} \left[ \dot u_1^2 + \partial_z u_1^2 + m^2 u_1^2 \right]_{z = b-t} - \frac{1}{2} \left[ \dot u_1^2 + \partial_z u_1^2 + m^2 u_1^2 \right]_{z = a+t} \nonumber \\
& = - \frac{1}{2} \left[ \dot u_1^2 + \partial_z u_1^2 + m^2 u_1^2 - 2 \dot u_1 \partial_z u_1\right]_{z = b-t} - \frac{1}{2} \left[ \dot u_1^2 + \partial_z u_1^2 + m^2 u_1^2 + 2 \dot u_1 \partial_z u_1 \right]_{z = a+t} \nonumber \\
& = - \frac{1}{2} \left[ \left(\dot u_1 - \partial_z u_1\right)^2 + m^2 u_1^2 \right]_{z = b-t} - \frac{1}{2} \left[ \left(\dot u_1 + \partial_z u_1\right)^2 + m^2 u_1^2 \right]_{z = a+t} \leq 0.
\label{BEF2}
\end{align}
Hence, $\epsilon^{I(t)}_{\rm B}[u_1](t) $ is non increasing for $   t\geq 0,$ and as $\epsilon^{I(0)}_{\rm B}[u_1](0)=0, $ it follows that $\epsilon^{I(t)}_{\rm B}[u_1](t)=0,$ for $ t \geq 0.$  In a similar way   we prove that $\frac{\dd}{\dd t} \epsilon^{I(t)}_{\rm B}[u_1](t) \geq 0,$ for $ t \leq 0,$ and as before we prove that   $\epsilon^{I(t)}_{\rm B}[u_1](t)=0,$ for $ t \leq 0.$  Hence,  $\epsilon^{I(t)}_{\rm B}[u_1](t)=0,$ in the whole diamond $D(I)$  and then, $u_1(t,z)$ is identically zero on $D(I).$

\item
Suppose now that the interval $I$ meets the right-end boundary, thus $I = [a, \ell]$. In this case, one is not concerned with the diamond region $D(I)$, but with the half 
diamond 
\beq
D_{\rm R}(I):= \left\{ (t, z) \in [  a-\ell,  \ell-a]\times [0,\ell]: z \in [a+|t|, \ell ]\right\}.
\ene
 Associated to any constant-$t$ cross-section, $I(t),$  of $D_{\rm R}(I),$
 \beq
 I(t):= \{ (t,z) \in D_{\rm R}(I) :  z \in [a+|t|, \ell ] \},
 \ene
we  define the energy functional
\begin{subequations} 
\begin{align}
\epsilon^{I(t)}_{\rm R}[u_1](t) & := \epsilon^{I(t)}_{\rm B}[u_1](t) + \epsilon_{\ell}[u_1](t), \quad \text{with} \label{EnergyFunctional} \\
 \epsilon_{\ell}[u_1](t) & := \frac{1}{2} \left[ - \frac{\beta_1}{\beta_2} (u_1(t,\ell))^2 + \frac{1}{\rho}\left( \frac{\beta_1 \beta_2'}{(-\beta_2 \beta_2')^{1/2}} u_1(t,\ell) + (-\beta_2 \beta_2')^{1/2}\partial_z u_1(t, \ell) \right)^2  \right. \nonumber \\ & + \left.  \frac{1}{\rho} \left( -\beta_1' \dot u_1(t,\ell) + \beta_2' \partial_z \dot u_1(t, \ell) \right)^2 \right], \label{EnergyFunctionalBoundary}
\end{align}
\end{subequations}
and $\epsilon^{I(t)}_{\rm B}[u_1](t)$ defined by \eqref{EnergyFunctionalBulk}. Under our  assumptions  the boundary energy functional $\epsilon_{\ell}[u_1](t)$ \eqref{EnergyFunctionalBoundary} is non-negative.

We begin by considering the bulk contribution \eqref{EnergyFunctionalBulk} to the energy functional \eqref{EnergyFunctional}. On any constant-$t$ cross section of $D_{\rm R}(I)$ to the future of $I = [a,\ell]$, that  we denote $I(t) := [a+t,\ell]$ with $0 \leq t \leq \ell-a$, we have by an adaptation of the argument  used to prove  \eqref{BEF2}
(where the upper integration limit is now $z = \ell$ in  \eqref{EnergyFunctionalBulk}) that
\begin{equation}
\frac{\dd}{\dd t} \epsilon^{I(t)}_{\rm B}[u_1](t) \leq \dot u_1(t, \ell) \partial_z u_1(t, \ell).
\label{EnergyFuncBound1}
\end{equation}

For the boundary contribution of the energy functional \eqref{EnergyFunctionalBoundary} we have that
\begin{align}
\frac{\dd}{\dd t} \epsilon_{\ell}[u_1](t) & =  - \frac{\beta_1}{\beta_2} u_1(t,\ell) \dot u_1(t,\ell) + \frac{1}{\rho} \left( -\beta_1' \dot u_1(t,\ell) + \beta_2' \partial_z \dot u_1(t, \ell) \right)\left( -\beta_1' \ddot u_1(t,\ell) + \beta_2' \partial_z \ddot u_1(t, \ell) \right) \nonumber \\
& + \frac{1}{\rho}\left( \frac{\beta_1 \beta_2'}{(-\beta_2 \beta_2')^{1/2}} u_1(t,\ell) + (-\beta_2 \beta_2')^{1/2}\partial_z u_1(t, \ell) \right) \left( \frac{\beta_1 \beta_2'}{(-\beta_2 \beta_2')^{1/2}} \dot u_1(t,\ell) + (-\beta_2 \beta_2')^{1/2}\partial_z \dot u_1(t, \ell) \right).
\end{align}

Using the Klein-Gordon equation \eqref{wxx.1}  and the boundary condition at $z = \ell$ in \eqref{DomA}, we have that
\begin{align}
\frac{\dd}{\dd t} \epsilon_{\ell}[u_1](t) & = -  \dot u_1(t, \ell) \partial_z u_1(t, \ell).
\label{EnergyFuncBound2}
\end{align}

Thus, we have from \eqref{EnergyFunctional}, \eqref{EnergyFuncBound1} and \eqref{EnergyFuncBound2} that throughout the future component of half-diamond region
\begin{equation}
\frac{\dd}{\dd t} \epsilon^{I(t)}_{\rm R}[u_1](t) \leq 0.
\end{equation}

Thus, vanishing initial data yields a vanishing solution throughout the component of $D_{\rm R}(I)$ to the future of $I$. The converse argument holds for the past component of $D_{\rm R}(I)$.

Hence, if at $t = 0$, we have vanishing data $u(0,z) = \dot u(0,z)$ for $z \in [a, \ell]$, the solution must vanish in whole half-diamond region $D_{\rm R}(I).$ 
\item
Suppose now that one is interested in the left half diamond of width $I = [0, a]$ intersecting the left boundary at $z = 0$. This forms the right half-diamond,
\beq
D_{\rm L}(I):= \left\{ (t, z) \in [  -a,  a]\times [0,\ell]: z \in [0, a-|t| ]\right\}.
\ene

An analogous argument using the energy functional associated to constant-$t$ cross sections of $D_{\rm L}(I),$
\beq
 I(t):= \{ (t,z) \in D_{\rm L}(I) :  z \in [0, a-|t|] \},
\ene
\begin{subequations} 
\begin{align}
\epsilon^{I(t)}_{\rm L}[u_1](t) & := \epsilon^{I(t)}_{\rm B}[u_1](t) + \epsilon_{0}[u_1](t), \quad \text{with} \label{EnergyFunctionalL}\\
\epsilon_{0}[u_1](t) & := -\frac{1}{2} \cot\alpha  \,  (u_1(t,0))^2,   \frac{\pi}{2}\leq    \alpha < \pi,  \qquad  \epsilon_{0}[u_1](t)=0, \alpha=0,
\label{EnergyFunctionalBoundaryRobin}
\end{align}
\end{subequations}
which under our assumptions is positive, implies that for vanishing initial data on $I$, the solution vanishes throughout the region $D_{\rm L}(I)$.

%
%
%

\end{enumerate}

Let us go back to the proof of  (ii). It is enough to
 prove that if $(t,z) \notin  J ({\rm supp}\, f_1),$ then, $ (\mathsf E f)_1(t,z) =0.$ To fix the ideas let us consider $(t,z)= (0, \ell/2).$ The other cases are proven in a similar  way. We have,
\begin{align}
J(\{(0, \ell/2)\})= [-\ell/2, -\infty]\times [0,\ell] \cup \{ (t',z)\in [-\ell/2, 0] \times [0,\ell]: z \in [ \ell/2- |t'|, \ell /2+|t'|] \}\cup \\
 \{ (t',z)\in [ 0, \ell/2] \times [0,\ell]: z \in  [\ell/2-t', \ell/2+t']\}
\cup  [\ell/2, \infty)\times  [0,\ell].   
 \end{align}
Then, if $(0, \ell/2)\notin J ( {\rm supp}\, f_1),$ we must have that $ f_1(t',z)=0,$ for $(t',z) \in  J(\{(0, \ell/2)\}).$ However if $f_1(t',z)=0$, for $(t',z) \in  [-\ell/2, -\infty]\times [0,\ell]  \cup  [\ell/2, \infty)\times  [0,\ell],$ $\phi_{t'}(t,z)=0$ by the uniqueness of the Cauchy problem for the Klein-Gordon equation. Further, if    $f_1(t',z)=0$, for $(t',z) \in  \{ (t',z)\in [-\ell/2, 0] \times [0,\ell]: z \in [ \ell/2- |t'|, \ell /2+|t'|] \}\cup \\
\{(t',z)\in [0,\ell
 \{ (t',z)\in [ 0, \ell/2)] \times [0,\ell]: z \in  [\ell/2-t', \ell/2+t'] \}$ then $\phi_{t'}(t,z)=0$ by  item (a) above. In the case of a general point $(0,z), z \in [0,\ell],$ we prove
 the 
same result using also items (b) and (c), and for points $(t,z)$ with $t \neq 0$ we use a translation in time.   Then, if  $(t,z) \notin  J ( {\rm supp}\, f_1),$ we have that, $ \mathsf( E f)_1(t,z) =0.$ 
\end{proof}

\begin{rem}{\rm
 For  $f, g \in C^2_0(\mathbb R, D(A)),$  with $f_1$ and $g_1$  causally disjoint, 
 i.e., that  are spacelike separated,  the commutator of $\hat{\Phi}(f)$ and $\hat{\Phi}(g)$ is zero. For this purpose, by \eqref{dd.zz.ww}, it is enough to prove that,
\beq\label{spd}
\int_{\mathbb R}\, dt\,   ((\mathsf E f)(t,\cdot),g(t,\cdot))_{\mathcal H}=0.
\ene
By Theorem ~\ref{causal} and since $f_1$ and $g_1$ are spacelike separated,
\beq\label{spd2}
\int_{\mathbb R}\, dt\, \int_{[0,\ell] } \, dz\, (\mathsf E f)_1(t,z)\,g_1(t,z)=0.
\ene
Furthermore, since $ f(t,\cdot), g(t,\cdot) \in D(A),$ by the boundary condition at $z=\ell$ in \eqref{DomA}
\beq\label{spd.1}
(Ef)_2(t)\, g_2(t)= (\beta_1' ( Ef)_1(t,\ell)- \beta_2' (\partial_z  (Ef )_1(t,\ell))\, (\beta_1' g_1(t,\ell) -\beta_2' \partial_z g_1(t,\ell)).
\ene
Since $f_1$ and $g_1$ are spacelike separated, the right-hand side of \eqref{spd.1} is zero, and then,  using also \eqref{spd2} we prove that \eqref{spd} holds.
}\end{rem}
We now introduce our definition of advanced  and retarded Green's operators.
\begin{defn}\label{greenop}
The  integral operator, $\mathsf E_{\rm ret},$ respectively, $\mathsf E_{\rm adv},$ from $C^2_0(\mathbb R, D(A))$  into  $C^2(\mathbb R, D(A))$ is a retarded Green's operator, respectively advanced Green's operator, if the following conditions are satisfied.
\begin{enumerate}[(i)]
\item 
\begin{align} 
Q_A \mathsf E_{\rm ret} f =\mathsf  E_{\rm ret} Q_A f=f , \qquad f \in C^2_0(\mathbb R, D(A)),\\
Q_A \mathsf E_{\rm adv} f =\mathsf  E_{\rm adv} Q_A f=f , \qquad f \in C^2_0(\mathbb R, D(A)).
\end{align}
\item 
\beq
\text{\rm supp} \, \mathsf  (E_{\rm ret} f)_1 \subset  J_- ( {\rm supp}\,  f_1),  \text{\rm supp} \, \mathsf  (E_{\rm adv} f)_1 \subset  J_+ ( {\rm supp}\,  f_1),  \qquad f \in C^2_0(\mathbb R, D(A)).
\ene
\end{enumerate}
\end{defn}
We define the following   integral operators, $\mathsf E_{\rm ret}, \mathsf E_{\rm adv}, $ from $C^2_0(\mathbb R, D(A))$  into  $C^2(\mathbb R, D(A)),$
\begin{subequations}
\begin{align}\label{argop}
(\mathsf E_{\rm ret}f)(t,z)& = \int_t^\infty\,dt'\,( E(t-t')\, f(t'))(z),\\
 (\mathsf  E_{\rm adv}f)(t,\cdot)& =  -\int_{-\infty}^t\,dt'\, (E(t-t')\, f(t'))(z), \qquad f \in C^2_0(\mathbb R, D(A)).\label{argop2}
\end{align}
\end{subequations}
where $E(t), t \in \mathbb R,$ is defined in \eqref{aaa.bbb}.

\begin{thm} Under the conditions of Theorem~\ref{causal} the  integral operator $\mathsf E_{\rm ret},$ defined in \eqref{argop}, is a  retarded  Green's operator, and the integral operator $\mathsf E_{\rm adv},$ defined in \eqref{argop2}, is an   advanced Green's operator.
\end{thm}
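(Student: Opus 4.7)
The plan is to verify the two conditions of Definition~\ref{greenop} for $\mathsf E_{\rm ret}$ by direct computation followed by an energy argument; the claim for $\mathsf E_{\rm adv}$ then follows by the same reasoning after time-reversal.

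For condition (i), I will differentiate under the integral in \eqref{argop}. Writing $u(t,\cdot):=(\mathsf E_{\rm ret} f)(t,\cdot)$ and using Leibniz's rule together with $E(0)=0$ and $\partial_t E(0)=-I$ (condition (iii) of Definition~\ref{causalprop}), one finds $\partial_t u(t,\cdot) = \int_t^{\infty}(\partial_t E)(t-t')f(t',\cdot)\,\dd t'$ and then $\partial_t^2 u(t,\cdot) = f(t,\cdot) + \int_t^{\infty}(\partial_t^2 E)(t-t')f(t',\cdot)\,\dd t' = f(t,\cdot) - A u(t,\cdot)$, where the operator ODE $\partial_t^2 E(t)=-A E(t)$ follows from the functional calculus applied to $E(t)=-\sin(\sqrt A\, t)/\sqrt A$. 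This gives $Q_A \mathsf E_{\rm ret} f = f$. The companion identity $\mathsf E_{\rm ret} Q_A f=f$ is obtained by two integrations by parts in $t'$: the endpoint contributions at $t'=+\infty$ vanish by the compact support of $f$, while the contributions at $t'=t$ collapse via $E(0)=0$ and $\partial_t E(0)=-I$ to exactly $f(t,\cdot)$.

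For condition (ii), the key observation is that, for $f\in C^2_0(\mathbb R,D(A))$, if $T$ exceeds $\sup(\mathrm{supp}_t f)$ then both $u(T,\cdot)$ and $\partial_t u(T,\cdot)$ are zero because the defining integrals are empty. Given $(t_0,z_0)\notin J_-(\mathrm{supp}\,f_1)$, the forward causal cone from $(t_0,z_0)$ truncated at time $T$ — which may reflect off $z=0$ or $z=\ell$ — lies entirely outside $\mathrm{supp}\,f_1$, so in this region $u$ solves the \emph{homogeneous} Klein--Gordon equation $Q_A u=0$ with zero Cauchy data on its top face at $t=T$. I then apply exactly the three energy-functional arguments (a), (b), (c) in the proof of Theorem~\ref{causal}, now run backward in time: each of the energy identities, see e.g.\ \eqref{BEF2} and \eqref{EnergyFuncBound1}--\eqref{EnergyFuncBound2}, is invariant under $t\mapsto -t$ up to reversing the direction of the one-sided inequality, so the bulk, right-half-diamond and left-half-diamond arguments deliver the vanishing of $u$ throughout the cone, in particular at $(t_0,z_0)$.

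The argument for $\mathsf E_{\rm adv}$ is identical with past and future interchanged: $u:=\mathsf E_{\rm adv} f$ has zero Cauchy data at any $T<\inf(\mathrm{supp}_t f)$, the sign in \eqref{argop2} is chosen precisely so that $Q_A\mathsf E_{\rm adv} f=f$ via the same computation, and the forward energy estimates of Theorem~\ref{causal} apply directly to force $\mathrm{supp}(\mathsf E_{\rm adv} f)_1\subset J_+(\mathrm{supp}\,f_1)$. The routine work lies in the integrations by parts for (i); the main obstacle, as is typical in finite-propagation-speed results, is the bookkeeping in (ii): one must cover the truncated forward cone of $(t_0,z_0)$ by a finite union of bulk diamonds and boundary half-diamonds from items (a), (b), (c) of the proof of Theorem~\ref{causal}, carry both the bulk and boundary components of the solution at each step, and check that the time-reversed versions of those energy identities remain sign-definite under the hypotheses of Proposition~\ref{prop:pos}.
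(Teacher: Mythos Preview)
Your proof is correct. For condition (i), your explicit Leibniz-rule computation with $E(0)=0$, $\partial_tE(0)=-I$, and $\partial_t^2E=-AE$ is exactly what the paper means by ``immediate from the functional calculus of self-adjoint operators''; the two are the same argument at different levels of detail.

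For condition (ii), however, you take a genuinely different route. The paper observes that the support statement follows directly from Theorem~\ref{causal}: in the proof of that theorem one showed that for each fixed $t'$ the function $\phi_{t'}(t,z)=(E(t-t')f(t',\cdot))(z)$ has first component supported in $J(\{t'\}\times\mathrm{supp}\,f_1(t',\cdot))$, and since the integral defining $\mathsf E_{\rm ret}f$ runs only over $t'\ge t$, each contributing $\phi_{t'}$ at time $t$ lies in the causal past of $\mathrm{supp}\,f_1$. Your approach instead exploits that $u=\mathsf E_{\rm ret}f$ has vanishing Cauchy data at any $T>\sup(\mathrm{supp}_t f)$ and then runs the energy estimates (a)--(c) backward in time through a covering by diamonds and half-diamonds. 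This is perfectly valid and self-contained, and it has the merit of not relying on the ``fibrewise'' formulation of Theorem~\ref{causal}; its cost is that you must redo the diamond-covering bookkeeping (including reflections at $z=0,\ell$) that the paper has already packaged into Theorem~\ref{causal}. The paper's route is shorter precisely because it reuses that packaging.
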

\begin{proof}
condition (i) of Definition~\ref{greenop} is immediate by the functional calculus of self-adjoint operators.  Moreover, condition (ii) of Definition~\ref{greenop}  follows from Theorem~\ref{causal} and condition (ii) of Definition~\ref{causalprop}.
\end{proof} 
Finally we prove the uniqueness of the advanced and retarded Green's operators and of the causal propagator.
\begin{thm}
Under the assumptions of Theorem~\ref{causal} the advanced and retarded Green's operators and the causal propagator are unique.
\end{thm}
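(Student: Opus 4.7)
The plan is to reduce all three uniqueness claims to the same abstract fact: the only $u \in C^2(\mathbb{R}, \mathcal H)$ satisfying $\partial_t^2 u + A u = 0$ with $u(0) = 0$ and $\partial_t u(0) = 0$ is $u \equiv 0$. This follows directly from the spectral theorem for the self-adjoint positive operator $A$, equivalently from the series expansion \eqref{w.13}, in which the Fourier coefficients $(\Psi_n, u(t))_{\mathcal H}$ are all forced to vanish by the zero initial data.

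For the causal propagator, I would take two candidates $\mathsf E^{(1)}$ and $\mathsf E^{(2)}$ satisfying Definition \ref{causalprop}, with kernels $E^{(1)}, E^{(2)}$, and set $D(t) := E^{(1)}(t) - E^{(2)}(t) \in \mathcal B(\mathcal H)$. Condition (iii) gives $D(0) = 0$ and $\partial_t D(0) = 0$. Condition (i), combined with the arbitrariness of $f \in C^2_0(\mathbb R, D(A))$, yields the operator identity $(\partial_t^2 + A) D(t) = 0$ on $D(A)$, which extends by continuity to all of $\mathcal H$. Applied to an arbitrary $v \in \mathcal H$, the Klein-Gordon uniqueness fact above forces $D(t) v = 0$ for every $t$, so $\mathsf E^{(1)} = \mathsf E^{(2)}$. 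Notice that condition (ii) plays no role in this argument; conditions (i) and (iii) by themselves pin $\mathsf E$ down.

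For the retarded Green's operator, set $\mathsf D := \mathsf E_{\rm ret}^{(1)} - \mathsf E_{\rm ret}^{(2)}$. By Definition \ref{greenop} one has $Q_A (\mathsf D f) = 0$ and $\text{supp}\,(\mathsf D f)_1 \subset J_-(\text{supp}\, f_1)$ for every $f \in C^2_0(\mathbb R, D(A))$. Since $f$ has compact temporal support there exists $T$ such that $J_-(\text{supp}\, f_1) \subset \{t \leq T\}$, whence $(\mathsf D f)_1(t, z) = 0$ for $t > T$. Because $(\mathsf D f)(t, \cdot) \in D(A)$ for each $t$, the boundary relation $\varphi_2 = \beta_1' \varphi_1(\ell) - \beta_2' \partial_z \varphi_1(\ell)$ in \eqref{DomA} then also forces $(\mathsf D f)_2(t) = 0$ for $t > T$. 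Hence at $t = T+1$ both the value and the time derivative of $\mathsf D f$ vanish, and applying the Klein-Gordon uniqueness fact (translated by $T+1$ in time) yields $\mathsf D f \equiv 0$. The uniqueness of $\mathsf E_{\rm adv}$ is entirely analogous: the support of $(\mathsf D f)_1$ is now bounded below in time, so one obtains vanishing Cauchy data at some time $-T-1$ and propagates uniqueness forward. I do not expect any real obstacle here: the abstract Klein-Gordon uniqueness is a one-line consequence of the spectral theorem, and the vanishing of the boundary component $(\mathsf D f)_2$ is handed to us automatically by the domain condition on $A$.
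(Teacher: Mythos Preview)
Your proof is correct, and it takes a genuinely different route from the paper's.

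The paper proves uniqueness of the retarded (and, analogously, advanced) Green's operator by a duality argument: given two retarded operators, their difference $\Delta\mathsf E_{\rm ret}$ satisfies $Q_A\,\Delta\mathsf E_{\rm ret} f=0$ and vanishes for large $t$; pairing against $\mathsf E_{\rm adv}g$ (which vanishes for small $t$) and integrating by parts in $t$ one moves $Q_A$ onto $\mathsf E_{\rm adv}g$, obtaining $\int_{\mathbb R}(\Delta\mathsf E_{\rm ret}f,g)_{\mathcal H}\,dt=0$ for all test $g$. Uniqueness of the causal propagator is then deduced by writing any causal propagator as the difference of a retarded and an advanced Green's operator built from its kernel via Heaviside cutoffs.

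Your argument replaces both steps by a single mechanism: Cauchy uniqueness for $\partial_t^2 u+Au=0$, which is indeed immediate from the spectral resolution of the positive self-adjoint operator $A$. For the Green's operators you extract zero Cauchy data from the support condition together with the boundary relation in $D(A)$ (a point the paper glosses over), and for the causal propagator you read off zero Cauchy data at $t=0$ directly from condition~(iii). This is more elementary and makes transparent your observation that condition~(ii) is not needed for uniqueness of $\mathsf E$; the paper's decomposition argument, by contrast, uses~(ii) to verify that the Heaviside-truncated pieces are genuine Green's operators.

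One small point worth making explicit in your causal-propagator step: passing from $Q_A\mathsf D f=0$ for all $f\in C^2_0(\mathbb R,D(A))$ to the kernel identity $(\partial_t^2+A)D(t)v=0$ for $v\in D(A)$ uses an approximate-identity argument together with the closedness of $A$ (to conclude that $D(t)v\in D(A)$ and that $A$ commutes with the limit). This is routine, but your phrase ``extends by continuity to all of $\mathcal H$'' hides it.
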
 
\begin{proof}

Let us first consider the  retarded Green's operator. Suppose that $\mathsf E_{\rm ret,1}$ and $\mathsf E_{\rm ret, 2}$ are retarded Green's operators. Denote,
 \beq
 \Delta \mathsf E_{\rm ret}:= \mathsf E_{\rm ret,1}- \mathsf E_{\rm ret ,2}.
 \ene
 Then, we have,
 \beq
 Q_A\, \Delta \mathsf E_{\rm ret}=0.
 \ene
 Moreover, as $\mathsf E_{{\rm ret} ,i}, i=1,2$ are retarded for any $ f \in C^2_0(\mathbb R, D(A)),$ for some $t_0 \in \mathbb R,$
 \beq
 (\Delta \mathsf E_{\rm ret} ) f(t,\cdot) =0,\qquad  t \geq t_0.
 \ene
 Furthermore,  for any $ g \in C^2_0(\mathbb R, D(A)),$ and with $\mathsf E_{\rm adv}$ the advanced Green's operator given,
  in \eqref{argop2} 
 \beq
 \mathsf E_{\rm adv }g(t,\cdot)=0, \qquad  t \leq t_1,
  \ene
  for  some $t_1 \in \mathbb R.$ It follows that the scalar product in $\mathcal H,$ $ (\Delta \mathsf E_{\rm ret} f(t,\cdot), \mathsf E_{\rm adv}\,g(t,\cdot))_{\mathcal H}$ has compact support in $t \in \mathbb R.$ Hence, we can integrate by parts in $t$
  in the identity below,
  \begin{align}
  0= \int_{\mathbb R}\, dt\,   (Q_A \Delta \mathsf E_{\rm ret } f(t,\cdot), \mathsf E_{\rm adv} g(t,\cdot))_{\mathcal H}=   \int_{\mathbb R}\, dt\,   ( \Delta \mathsf E_{\rm ret} f(t,\cdot),Q_A \mathsf E_{\rm adv} g(t,\cdot))_{\mathcal H}=   \int_{\mathbb R}\, dt\,   ( \Delta \mathsf E_{\rm ret } f(t,\cdot), g(t,\cdot))_{\mathcal H}.  
  \end{align} 
  Hence, we conclude that,
  \beq
  \int_{\mathbb R}\, dt\,   ( \Delta \mathsf E_{\rm ret} f(t,\cdot), g(t,\cdot))_{\mathcal H}=0, \qquad f,g \in C^2_0(\mathbb R, D(A)),  
\ene
and then, $\Delta \mathsf E_{\rm ret} =0.$ We prove in a similar y that the advanced Green's operator is unique. Moreover let $\mathsf E$ be a causal propagator with integral kernel 
$E(t-t').$ Denote by $\mathsf E_{\rm ret }$ the retarded  Green's operator with integral kernel  $\Theta(t'-t)\, E(t-t')$
 and by $\mathsf E_{\rm adv}$ the advanced Green's operator with integral kernel
$ - \Theta(t-t')\,E(t-t').$  
Here $ \Theta$ is the  Heaviside function. Hence,
\beq
\mathsf E= \mathsf E_{\rm ret} - \mathsf E_{\rm adv},
\ene
and as $\mathsf E_{\rm ret}, \mathsf E_{\rm adv}$ are unique, it follows that $\mathsf E$ is unique.
\end{proof}
The causal propagator and the advanced and retarded Green's operators were considered in \cite{Dappiaggi:2018jsq}. Note, however, that for the construction of the causal propagator and the advanced and retarded Green's operators it is crucial that these operators are defined both for functions with support in the interior of the spacetime,  and for functions that  have support on the boundary of the spacetime and that satisfy the boundary condition, i.e., that are in $D(A).$ This is also crucial for the uniqueness of the causal propagator and of the advanced and retarded Green's operators. Note that the causal propagator and the advanced and retarded Green's operators for different boundary conditions are different, and  that they are causal propagator and advanced and retarded Green's operators when restricted to the interior of the spacetime.

As is well known, the causal propagator and the advanced and retarded Green's operators are unique  in globally hyperbolic spacetimes \cite{Lichne}.  Note that $C_0^\infty((0, \ell)))\oplus\{0\} \subset D(A)$ and that our causal propagator remains causal when restricted to a globally hyperbolic subspacetime, and, hence, it has to coincide with the  causal propagator of the globally hyperbolic subspacetime. This means that our theory is F-local in the sense of  Kay \cite{Kay:1992es} (see also \cite{Fewster:1995bu}).

\section{Final remarks}
\label{sec:Conc}

We have extended our previous work \cite{Juarez-Aubry:2020psk}, where the bulk and boundary renormalized local state polarizations and local Casimir energies in the ground state and at finite positive temperature were obtained for a mixed bulk-boundary system, consisting of a scalar field defined on the interval $[0, \ell]$ coupled to a boundary observable defined on the right end of the interval. In this work, we have obtained these local renormalized state polarizations and Casimir energies in coherent and thermal coherent states.

We have found that the difference between the local Casimir energy in the ground state and in a coherent state is given by the classical energy of the configuration around which the coherent state is ``peaked", and similarly at finite temperature. This situation is in analogy to what occurs for coherent states in globally hyperbolic spacetimes, where the stress-energy tensor in a coherent state consist of a classical component plus a quantum component. We have also explored numerically the Casimir force.

We have also studied some of the structural properties of the theory defined by our bulk-boundary system.  We have proved that our theory has a unique causal propagator and unique advanced and retarded Green's operators. The causal propagator gives the commutation relations of the quantum fields, and it is F-local.

As we have commented in \cite{Juarez-Aubry:2020psk}, several generalizations of this work can be pursued. First, we have studied the bulk-boundary mixed system on a interval, but in principle one could generalise this to $n$ dimensions with the aid of the Fourier transform and controlling certain integrals. Second, other linear fields can be studied by similar methods. Third, we have studied the static Casimir effect, but the dynamical Casimir effect is also of great relevance. It is especially interesting the case in which the coefficients $\beta_1$, $\beta_1'$, $\beta_2$ and $\beta_2'$ are time-dependent, in which case we should expect particle creation.
\section*{Acknowledgments}
Benito A. Ju\'arez-Aubry is supported by a DGAPA-UNAM Postdoctoral Fellowship. 
We thank Claudio Dappiaggi for some useful discussions. 
\appendix

\section{Renormalized local state polarization and local Casimir energy in the ground state and at finite temperature}
\label{App:OldResults}

In this appendix, we collect results  in the quantization of our mixed bulk-boundary system,  that we obtained in \cite{Juarez-Aubry:2020psk}, and that we used in sec. \ref{sec:DynBC}. More precisely, we present the bulk and boundary renormalized local state polarizations and local Casimir energies for the cases of a Dirichlet ($\alpha = 0$) and  Robin ($\alpha \neq 0$) boundary condition at $z = 0$.

\vskip-5cm
\subsection{Renormalized local state polarization for a Dirichlet boundary condition at $z = 0$}

\subsubsection{At zero temperature}

The bulk renormalized local state polarization in the ground state is
\begin{align}
\langle \Omega_\ell^{({\rm D})} |  ( \hat \Phi^{{\rm B}}_{\rm ren})^2(t,z) \Omega_\ell^{({\rm D})} \rangle & =  \frac{1}{4 \pi} \ln \left( \frac{m^2 \ell^2}{4 \pi^2} \right) +   \frac{\gamma}{2 \pi} + \frac{1}{2 \pi}\Re \left( \ee^{\frac{\ii \pi}{\ell} z} \ln \left(1 - \ee^{\frac{\ii 2 \pi}{\ell} z} \right) \right) \nonumber \\
& + \sum_{n = 1}^\infty \left[ \frac{(\mathcal{N}_n^{\rm D})^2}{2\omega_n} \sin^2\left({s_n^{\rm D} z}\right) - \frac{1}{\pi n} \sin^2\left(\frac{\pi}{\ell} (n-1/2) z \right)  \right],
\label{DirichletVacuumBulk}
\end{align}
where the sum appearing on the right-hand side of \eqref{DirichletVacuumBulk} is absolutely  and uniformly convergent in $0<z<\ell$. 

The corresponding boundary state polarization is

\begin{align}
& \langle \Omega_\ell^{({\rm D})} |  ( \hat  \Phi^{\partial}_{\rm ren} )^2 \Omega_\ell^{({\rm D})} \rangle   = \sum_{n = 1}^\infty \frac{(\mathcal{N}^{\rm D}_n)^2}{2\omega^{\rm D}_n} \left[-\beta_1' \sin(\ell s_n^{\rm D}) + \beta_2' s_n^{\rm D} \cos(\ell s_n^{\rm D})\right]^2.
\label{DirichletVacuumBound}
\end{align}

\subsubsection{At temperature $T >0$}

At finite temperature $T = 1/\beta > 0$, we have that the bulk renormalized local state polarization is given by
\begin{align}
\langle(\hat \Phi^{\rm B}_{\rm ren} )^2(t,z) \rangle_{\beta, {\rm D}} & = \langle \Omega_\ell^{{\rm (D)}} | ( \hat \Phi^{{\rm B}}_{\rm ren})^2(t, z) \Omega_\ell^{{\rm (D)}} \rangle + \sum_{n = 1}^\infty  \frac{\left[\psi_n^{\rm (D)} (z) \right]^2}{ \omega_n^{\rm D} \left(\ee^{\beta \omega_n^{\rm D}}-1\right)}, \label{TBulkPolaD}
\end{align}
where the sum on the right-hand side of  \eqref{TBulkPolaD} is  uniformly convergent  in  $z \in (0, \ell),$ 
  and it converges exponentially fast.
The corresponding boundary state polarization is
\begin{align}
\langle( \hat \Phi^{\partial}_{\rm ren} )^2(t) \rangle_{\beta, {\rm D}} & = \langle \Omega_\ell^{{\rm (D)}} | ( \hat  \Phi^{\partial} )^2_{\rm ren}(t) \Omega_\ell^{{\rm (D)}} \rangle + \sum_{n = 1}^\infty  \frac{\left[\psi_n^{\partial \, \rm (D)} \right]^2 }{ \omega_n^{\rm D} \left(\ee^{\beta \omega_n^{\rm D}}-1\right)}, \label{TBoundPolaD}
\end{align}
where the sum on the right-hand side of \eqref{TBoundPolaD}  converges exponentially fast.

\subsection{Local Casimir energy for a Dirichlet boundary condition at $z = 0$}

\subsubsection{At zero temperature}

The bulk local Casimir energy in the ground state is
\begin{align}
 \langle \Omega_\ell^{({\rm D})} |  \hat H^{{\rm B}}_{\rm ren}(t,z) \Omega_\ell^{({\rm D})} \rangle & = \frac{\pi ^2 \beta_2'+6 (2 \gamma -1) \beta_2' \ell^2 m^2+6 \beta_2' \ell^2 m^2 \ln \left(\frac{\ell^2 m^2}{4 \pi ^2}\right)+24 \beta_1' \ell}{48 \pi  \beta_2' \ell^2} \nonumber \\
 & + \frac{m^2}{2 \pi} \Re\left( \ee^{-\ii \frac{\pi}{\ell} z} \ln \left( 1- \ee^{\ii \frac{2 \pi}{\ell} z} \right) \right) + \sum_{n = 1}^\infty \left[  \left( \frac{(\mathcal{N}_n^{\rm D})^2 \omega^{\rm D}_n}{4} - \frac{\pi n }{2 \ell^2} + \frac{\pi }{4  \ell^2}  - \frac{ m^2 }{4 \pi  n} \right) \right. \nonumber \\
&  \left.  - m^2 \left( \frac{(\mathcal{N}_n^{\rm D})^2 }{8 \omega^{\rm D}_n} - \frac{1}{4 \pi n} \right) + \frac{(\mathcal{N}_n^{\rm D})^2 m^2}{4 \omega^{\rm D}_n} \sin^2\left( s_n^{\rm D} z \right) - \frac{m^2}{2 \pi n} \sin^2\left( \frac{\pi}{\ell}(n-1/2) z \right) \right],
\label{HDiriBulk}
\end{align}
where the sum that appears on the right-hand side of  \eqref{HDiriBulk} converges absolutely and uniformly in $z \in (0, \ell)$. 
The corresponding boundary Casimir energy is
\begin{align}
 \langle \Omega_\ell^{({\rm D})} | \hat H^{\partial}_{\rm ren}(t) \Omega_\ell^{({\rm D})} \rangle & =   \sum_{n = 1}^\infty \frac{(\mathcal{N}^{\rm D}_n)^2\left( |\beta_1'| ( \omega_n^{\rm D})^2  - (\text{\rm sign} \, \beta_1' )\beta_1 \right)}{4\omega_n^{\rm D}}  \left[-\beta_1' \sin(\ell s_n^{\rm D}) + \beta_2' s_n^{\rm D} \cos(\ell s_n^{\rm D})\right]^2,
\label{DiriHBound}
\end{align}
where the sum on the right-hand side of  \eqref{DiriHBound} converges absolutely.

\subsubsection{At temperature $T >0$}

At finite temperature $T = 1/\beta > 0$, we have that the bulk local Casimir energy is given by
\begin{align}
\langle& \hat H^{{\rm B}}_{\rm ren}(t, z) \rangle_{\beta, {\rm D}}  = \langle \Omega_\ell^{\rm (D)} | \hat H^{{\rm B}}_{\rm ren}(t, z) \Omega_\ell^{{\rm (D)}} \rangle  + \sum_{n = 1}^\infty \left[\frac{\left( (\omega_n^{\rm D})^2 + m^2\right)}{2 \omega_n^{\rm D}} \frac{\left[\psi_n^{\rm (D)} (z)\right]^2}{\ee^{\beta \omega_n^{\rm D}}-1} + \frac{1}{2 \omega_n^{\rm (D)}} \frac{\left[{\partial_z\psi_n^{\rm (D)}} (z)\right]^2}{\ee^{\beta \omega_n^{\rm D}}-1} \right], \label{HTbulkD}
\end{align}
where the sum on the right-hand side of  \eqref{HTbulkD} is  uniformly convergent in $z \in (0, \ell)$, and it converges exponentially fast. 
The corresponding boundary Casimir energy is
\begin{align}
\langle  \hat H^{\partial}_{\rm ren}(t) \rangle_{\beta, {\rm D}}  = \langle \Omega_\ell^{{\rm (D)}} | \hat H^{\partial}_{\rm ren}(t) \Omega_\ell^{{\rm D}} \rangle + \sum_{n = 1}^\infty \frac{\left( |\beta_1'| (\omega_n^{\rm D})^2 - (\text{\rm sign} \,\beta_1')  \beta_1 \right)}{2 \omega_n^{\rm D}} \frac{\left[\psi_n^{\partial \, \rm (D)}\right]^2 }{\ee^{\beta \omega_n^{\rm D}}-1}, \label{HTboundaryD}
\end{align}
where the sum on the right-hand side of \eqref{HTboundaryD} is absolutely convergent and it converges exponentially fast.

\subsection{Renormalized local state polarization for a Robin boundary condition at $z = 0$}

\subsubsection{At zero temperature}

The bulk renormalized local state polarization in the ground state is
\begin{align}
& \langle \Omega_\ell^{({\rm R})} |  ( \hat \Phi^{{\rm B}}_{\rm ren})^2(t,z) \Omega_\ell^{({\rm R})} \rangle 
 =  \frac{1}{4 \pi} \ln \left( \frac{m^2 \ell^2}{4 \pi^2} \right) +   \frac{\gamma}{2 \pi} - \frac{1}{2 \pi}\Re \left( \ee^{-\frac{\ii \pi}{\ell} z} \ln \left(1 - \ee^{\frac{\ii 2 \pi}{\ell} z} \right) \right) \nonumber \\
& + \sum_{n = 1}^\infty \left\{ \left[ \frac{(\mathcal{N}^{\rm R}_n)^2}{2 \omega_n^{\rm R}} \left(\sin^2 \alpha \cos^2 \left( s_n^{\rm R} z\right) + \frac{\cos^2 \alpha}{(s_n^{\rm R})^2} \sin^2 \left( s_n^{\rm R} z\right) \right)  - \frac{1}{ \pi n} \cos^2 \left( \frac{\pi}{\ell}(n-1) z\right) \right]   \right. \nonumber \\
& \left. - \frac{(\mathcal{N}^{\rm R}_n)^2}{2 \omega_n^{\rm R} s_n^{\rm R}}\sin \alpha \cos \alpha      \sin\left(2 s_n^{\rm R} z \right)\right\},
\label{RobinVacuumBulk}
\end{align}
where the sum the on right-hand side of \eqref{RobinVacuumBulk} converges absolutely and uniformly in $0<z<\ell$.
For the corresponding boundary state polarization we have
\begin{align}
 \langle \Omega_\ell^{({\rm R})} | ( \hat  \Phi^{\partial}_{\rm ren} )^2(t) \Omega_\ell^{({\rm R})} \rangle & = \sum_{n = 1}^\infty \frac{(\mathcal{N}_n^{\rm R})^2}{2\omega_n^{\rm R}} \left[\beta_1' \left(\sin \alpha  \cos (\ell s_n^{\rm R})-\frac{\cos \alpha  \sin (\ell s_n^{\rm R})}{s_n^{\rm R}}\right)  +\beta_2' (\cos \alpha \cos (\ell s_n^{\rm R})+s_n^{\rm R} \sin \alpha  \sin (\ell s_n^{\rm R})) \right]^2.
\label{RobinVacuumBoundary}
\end{align}

\subsubsection{At temperature $T >0$}

At finite temperature $T = 1/\beta > 0$, we have that the bulk renormalized local state polarization is given by
\begin{align}
\langle(\hat \Phi^{\rm B}_{\rm ren} )^2(t,z) \rangle_{\beta, {\rm R} } & = \langle \Omega_\ell^{{\rm (R)}} | ( \hat \Phi^{{\rm B}}_{\rm ren})^2(t, z) \Omega_\ell^{{\rm (R)}} \rangle + \sum_{n = 1}^\infty  \frac{\left[\psi_n^{\rm (R)} (z) \right]^2}{ \omega_n^{\rm R} \left(\ee^{\beta \omega_n^{\rm R}}-1\right)}, \label{TBulkPolaR}
\end{align}
where the sum on the right-hand side of  \eqref{TBulkPolaR} is uniformly convergent in $z \in (0, \ell)$,  and it converges exponentially fast.  
The corresponding boundary state polarization is
\begin{align}
\langle( \hat \Phi^{\partial}_{\rm ren} )^2(t) \rangle_{\beta, {\rm R}} & = \langle \Omega_\ell^{{\rm (R)}} | ( \hat  \Phi^{\partial} )^2_{\rm ren}(t) \Omega_\ell^{{\rm (R)}} \rangle + \sum_{n = 1}^\infty  \frac{\left[\psi_n^{\partial \, \rm (R)} \right]^2 }{ \omega_n^{\rm R} \left(\ee^{\beta \omega_n^{\rm R}}-1\right)}, \label{TBoundPolaR}
\end{align}
where the sum on the right-hand side of \eqref{TBoundPolaR}  converges exponentially fast.

\subsection{Local Casimir energy for a Robin boundary condition at $z = 0$}

\subsubsection{At zero temperature}

The bulk local Casimir energy in the ground  state is
\begin{align}
& \langle  \Omega_\ell^{({\rm R})} |  \hat H^{\rm B}(t,z) \Omega_\ell^{({\rm R})} \rangle = -\frac{\pi }{24 \ell^2}-\frac{\cot \alpha }{2 \pi  \ell}-\frac{\beta_1'}{2 \pi  \beta_2' \ell}+\frac{m^2}{8 \pi }\left[ 1  + \ln \left( \frac{m^2 \ell^2}{4 \pi^2} \right) \right] + \frac{\gamma m^2 }{4 \pi} \nonumber \\
& - \frac{m^2}{2 \pi} \Re \left[\ee^{-\ii \frac{ 2 \pi }{\ell} z} \ln \left(1-\ee^{\ii  \frac{2 \pi}{\ell} z}\right) \right] - m^2 \sum_{n = 1}^\infty \frac{(\mathcal{N}^{\rm R}_n)^2}{2 \omega_n^{\rm R} s_n^{\rm R}}\sin \alpha \cos \alpha      \sin\left(2 s_n^{\rm R} z \right) \nonumber \\
&+\sum_{n = 1}^\infty \left\{  \frac{(\mathcal{N}^{\rm R}_n)^2}{8 \omega_n^{\rm R}} \left((\omega_n^{\rm R})^2 + (s_n^{\rm R})^2 \right)     \left(\sin^2 \alpha + \frac{\cos^2 \alpha}{(s_n^{\rm R})^2}  \right) - \frac{\pi (n-1)}{2 \ell^2}  \right\} \nonumber \\
& + \frac{m^2}{2}\sum_{n = 1}^\infty \left[ \frac{(\mathcal{N}^{\rm R}_n)^2}{4 \omega_n^{\rm R}} \left(\sin^2 \alpha - (s_n^{\rm R})^{-2} \cos^2 \alpha \right) \cos(2 s_n^{\rm R} z)   - \frac{1}{ 2 \pi n} \cos \left( \frac{2(n-1) \pi}{\ell} z \right) \right]  \nonumber \\
& + \frac{m^2}{2} \sum_{n = 1}^\infty  \left[ \frac{(\mathcal{N}^{\rm R}_n)^2}{2 \omega_n^{\rm R}} \left(\sin^2 \alpha \cos^2 \left( s_n^{\rm R} z\right) + \frac{\cos^2 \alpha}{(s_n^{\rm R})^2} \sin^2 \left( s_n^{\rm R} z\right) \right) - \frac{1}{ \pi n} \cos^2 \left( \frac{\pi}{\ell}(n-1) z\right) \right] .
\label{RobinHBulk}
\end{align}
where the sums on the right-hand side of  \eqref{RobinHBulk} converge absolutely and uniformly in $0<z<\ell$.
The corresponding boundary Casimir energy is
\begin{align}
\langle \Omega_\ell^{({\rm R})} | \hat H^\partial_{\rm ren}(t) \Omega_\ell^{({\rm R})} \rangle & = \sum_{n = 1}^\infty \frac{\left[|\beta_1'| (\omega_n^{\rm R})^2 - (\text{\rm sign}    \, \beta_1') \beta_1 \right] (\mathcal{N}_n^{\rm R})^2}{4\omega_n^{\rm R}} \left[\beta_1' \left(\sin \alpha  \cos (\ell s_n^{\rm R})-\frac{\cos \alpha  \sin (\ell s_n^{\rm R})}{s_n^{\rm R}}\right) \right. \nonumber \\
& \left.  +\beta_2' (\cos \alpha \cos (\ell s_n^{\rm R})+s_n^{\rm R} \sin \alpha  \sin (\ell s_n^{\rm R})) \right]^2,
\label{RobinHBoundary}
\end{align}
where the sum on the right-hand side of  \eqref{RobinHBoundary}  is absolutely convergent.

\subsubsection{At temperature $T >0$}

At finite temperature $T = 1/\beta > 0$, we have that the bulk local Casimir energy is given by
\begin{align}
\langle& \hat H^{{\rm B}}_{\rm ren}(t, z) \rangle_{\beta, {\rm R}}  = \langle \Omega_\ell^{\rm (R)} | \hat H^{{\rm B}}_{\rm ren}(t, z) \Omega_\ell^{{\rm (R)}} \rangle  + \sum_{n = 1}^\infty \left[\frac{\left( (\omega_n^{\rm R})^2 + m^2\right)}{2 \omega_n^{\rm R}} \frac{\left[\psi_n^{\rm (R)} (z)\right]^2}{\ee^{\beta \omega_n^{\rm R}}-1} + \frac{1}{2 \omega_n^{\rm R}} \frac{\left[{\partial_z\psi_n^{\rm (R)}} (z)\right]^2}{\ee^{\beta \omega_n^{\rm R}}-1} \right], \label{HTbulkR}
\end{align}
where the sum on the right-hand side of  \eqref{HTbulkR} is   uniformly convergent in $z \in (0, \ell)$, and it converges exponentially fast. 
The corresponding boundary Casimir energy is
\begin{align}
\langle  \hat H^{\partial}_{\rm ren}(t) \rangle_{\beta, {\rm R}}  = \langle \Omega_\ell^{{\rm (R)}} | \hat H^{\partial}_{\rm ren}(t) \Omega_\ell^{{\rm (R)}} \rangle + \sum_{n = 1}^\infty \frac{\left( |\beta_1'| (\omega_n^{\rm R})^2 - (\text{\rm sign}\, \beta_1')   \beta_1 \right)}{2 \omega_n^{\rm R}} \frac{\left[\psi_n^{\partial \, \rm (R)}\right]^2 }{\ee^{\beta \omega_n^{\rm R}}-1} \label{HTboundaryR}
\end{align}
where the sum on the right-hand side of \eqref{HTboundaryR} is absolutely convergent, and it converges exponentially fast.

\end{document}